\theoremstyle{plain}
\newtheorem{thm}{Theorem}
\newtheorem{lem}{Lemma}
\newtheorem{cor}[thm]{Corollary}
\theoremstyle{definition}
\theoremstyle{remark}
\begin{document}

\begin{frontmatter}

\title{Variable Screening for High Dimensional Time Series}
\runtitle{Screening for Time Series}

\begin{aug}
\author{\fnms{Kashif} \snm{Yousuf}\ead[label=e1]{ky2304@columbia.edu}},
\runauthor{Kashif Yousuf}

\affiliation{Columbia University\thanksmark{m1}}

\address{Columbia University \\ Department of Statistics \\
1255 Amsterdam Ave \\
New York, NY 10027 \\
\printead{e1}}
\end{aug}

\runauthor{K. Yousuf}

\begin{abstract}
Variable selection is a widely studied problem in high dimensional statistics, primarily since estimating the precise relationship between the covariates and the response is of great importance in many scientific disciplines. However, most of theory and methods developed towards this goal for the linear model invoke the assumption of iid sub-Gaussian covariates and errors. This paper analyzes the theoretical properties of Sure Independence Screening (SIS) (Fan and Lv \cite{FanLv2008}) for high dimensional linear models with dependent and/or heavy tailed covariates and errors. We also introduce a generalized least squares screening (GLSS) procedure which utilizes the serial correlation present in the data. By utilizing this serial correlation when estimating our marginal effects, GLSS is shown to outperform SIS in many cases. For both procedures we prove sure screening properties, which depend on the moment conditions, and the strength of dependence in the error and covariate processes, amongst other factors. Additionally, combining these screening procedures with the adaptive Lasso is analyzed. Dependence is quantified by functional dependence measures (Wu \cite{Wu2005}), and the results rely on the use of Nagaev-type and exponential inequalities for dependent random variables. We also conduct simulations to demonstrate the finite sample performance of these procedures, and include a real data application of forecasting the US inflation rate.

\end{abstract}

\begin{keyword}[class=AMS]
\kwd[Primary ]{62F07} \kwd[; secondary ]{62J07}
\end{keyword}

\begin{keyword}
\kwd{High-dimensional Statistics} \kwd{sparsity} \kwd{Lasso}
\kwd{Time Series} \kwd{functional dependence measure}
\kwd{Variable Selection} \kwd{Nagaev inequality} \kwd{Sure Independence Screening}
\end{keyword}

\end{frontmatter}

\section{Introduction} \label{sec:introduction}

With the advancement of data acquisition technology, high dimensionality is a characteristic of data being collected in fields as diverse as health sciences, genomics, neuroscience, astronomy, finance, and macroeconomics. Applications where we have a large number of predictors for a relatively small number of observations are becoming increasingly common. For example, in disease classification we usually have thousands of variables, such as expression of genes, which are collected, while the sample size is usually in the tens. Other examples include fMRI data, where the number of voxels can number in the thousands and far outnumber the observations. For an overview of high dimensionality in economics and finance, see \cite{Fanetal2011}. For the biological sciences, see \cite{FanRen2006,Bickeletal2009} and references therein. The main goals in these situations according to \cite{Bickel2008} are:
\begin{itemize}
\item To construct as effective a method as possible to predict future observations.
\item To gain insight into the relationship between features and response for scientific purposes, as well as hopefully, to construct an improved prediction method.
\end{itemize}

More formally we are dealing with the case where 
\begin{equation}\bm{y}=\bm{X}\bm{\beta}+\bm{\epsilon} \label{eq1}\end{equation} 
with $\bm{y} =\left(Y_1,\ldots,Y_n\right)^T$ being an $n$-vector of responses, $\bm{X}=\left(\bm{x_1},\ldots,\bm{x_n}\right)^T$ being an $n\times p_n$ random design matrix, and $\bm{\epsilon}=\left(\epsilon_1,\ldots,\epsilon_n\right)^T$ is a random vector of errors. In addition, when the dimensionality of the predictors ($p_n$) is large we usually make the assumption that the underlying coefficient vector $(\bm{\beta})$ is sparse. Sparsity is a characteristic that is frequently found in many scientific applications \cite{FanLv2008},\cite{Johnstone2009}. For example, in disease classification it is usually the case that only a small amount of genes are relevant to predicting the outcome.

Indeed, there are a wealth of theoretical results and methods that are devoted to this issue. Our primary focus is on screening procedures. Sure Independence Screening (SIS) as originally introduced in \cite{FanLv2008}, was applicable to the linear model, and is based on a ranking of the absolute values of the marginal correlations of the predictors with the response. This method allows one to deal with situations in which the number of predictors is of an exponential order of the number of observations, which they termed as ultrahigh dimensionality. Further work on the topic has expanded the procedure to cover the case of generalized linear models \cite{FanSong2010}, non-parametric additive models \cite{Feng2011}, Cox proportional hazards model \cite{fan2010high}, single index hazard rate models \cite{Gorst2013}, and varying coefficient models \cite{Fanetal2014}. Model-free screening methods have also been developed. For example; screening using distance correlation was analyzed in \cite{Lietal2012}, a martingale difference correlation approach was introduced in \cite{Shao2014}, additional works include \cite{Zhuetal2011}, \cite{Huang2016} among others. For an overview of works related to screening procedures, one can consult \cite{Liuetal2015}. The main result introduced with these methods is that, under appropriate conditions, we can reduce the predictor dimension from size $p_n=O\left(\exp\left(n^\alpha\right)\right)$, for some $\alpha<1$, to a size $d_n$, while retaining all the relevant predictors with probability approaching 1. 

Another widely used class of methods is based on the penalized least squares approach. An overview of these methods is provided in \cite{FanLv2010} and \cite{Bickeletal2009}. Examples of methods in this class are the Lasso \cite{Tibshirani1996}, and the adaptive Lasso \cite{Zou2006}. Various theoretical results have been discovered for these class of methods. They broadly fall into analyzing the prediction error $|X(\hat{\beta}-\beta)|_{2}^{2}$, parameter estimation error $|\hat{\beta}-\beta|_{1}$, model selection consistency, as well as limiting distributions of the estimated parameters (see \cite{Bulhmann2011} for a comprehensive summary). Using screening procedures in conjunction with penalized least squares methods, such as the adaptive Lasso, presents a powerful tool for variable selection. Variable screening can allow us to quickly reduce the parameter dimension $p_n$ significantly, which weakens the assumptions needed for model selection consistency of the adaptive Lasso \cite{Huangetal2008, MM2016}. 

A key limitation of the results obtained for screening methods, is the assumption of independent observations. In addition, it is usually assumed that the covariates and the errors are sub-Gaussian (or sub-exponential). However, there are many examples of real world data where these assumptions are violated. Data which is observed over time and/or space such as meteorological data, longitudinal data, economic and financial time series frequently exhibit covariates and/or errors which are serially correlated. One specific example is the case of fMRI time series, where there can exist a complicated spatial-temporal dependence structure in the errors and the covariates (see \cite{Worsley2002}). Another example is in forecasting macroeconomic indicators such as GDP or inflation rate, where we have large number of macroeconomic and financial time series, along with their lags, as possible covariates. Examples of heavy tailed and dependent errors and covariates can be found most prominently in financial, insurance and macroeconomic data. 

These examples stress why it is extremely important for variable selection methods to be capable of handling scenarios where the assumption of independent sub-Gaussian (or sub-exponential) observations is violated. Some works related to this goal for the Lasso include \cite{Wangetal2007}, which extended the Lasso to jointly model the autoregressive structure of the errors as well as the covariates. However, their method is applicable only to the case where $p_n<n$, and they assume an autoregressive structure where the order of the process is known. Whereas \cite{WuandWu2016} studied the theoretical properties of the Lasso assuming a fixed design in the case of heavy tailed and dependent errors. Additionally \cite{Basu2015}, and \cite{Kock2015} investigated theoretical properties of the Lasso for high-dimensional Gaussian processes. Most recently \cite{MM2016} analyzed the adaptive Lasso for high dimensional time series while allowing for both heavy tailed covariate and errors processes, with the additional assumption that the error process is a martingale difference sequence.

Some works related to this goal for screening methods include \cite{LiGetal2012}, which allows for heavy tailed errors and covariates. Additionally \cite{Chang2013}, \cite{Shuang2014}, and \cite{Zhuetal2011} also relax the Gaussian assumption, with the first two requiring the tails of the covariates and the response to be exponentially light, while the latter allows for heavy tailed errors provided the covariates are sub-exponential. Although these works relax the moment and distributional assumptions on the covariates and the response, they still remain in the framework of independent observations. A few works have dealt with correlated observations in the context of longitudinal data (see \cite{Cheng2014},\cite{XuB2014}). However, the dependence structure of longitudinal data is too restrictive to cover the type of dependence present in most time series. Most recently \cite{Chenlu2017} proposed a non-parametric kernel smoothing screening method applicable to time series data. In their work they assume a sub-exponential response, covariates that are bounded and have a density, as well as assuming the sequence $\{(Y_i,\bm{x}_i)\}$ is strong mixing, with the additional assumption that the strong mixing coefficients decay geometrically. These assumptions can be quite restrictive; they exclude, for example, heavy tailed time series, and discrete valued time series which are common in fields such as macroeconomics, finance, neuroscience, amongst others \cite{Davis2016}. 

In this work, we study the theoretical properties of SIS for the linear model with dependent and/or heavy tailed covariates and errors. This allows us to substantially increase the number of situations in which SIS can be applied. However, one of the drawbacks to using SIS in a time series setting is that the temporal dependence structure between observations is ignored. In an attempt to correct this, we introduce a generalized least squares screening (GLSS) procedure, which utilizes this additional information when estimating the marginal effect of each covariate. By using GLS to estimate the marginal regression coefficient for each covariate, as opposed to OLS used in SIS, we correct for the effects of serial correlation. Our simulation results show the effectiveness of GLSS over SIS, is most pronounced when we have strong levels of serial correlation and weak signals. Using the adaptive Lasso as a second stage estimator after applying the above screening procedures is also analyzed. Probability bounds for our combined two stage estimator being sign consistent are provided, along with comparisons between our two stage estimator and the adaptive Lasso as a stand alone procedure.

Compared to previous work, we place no restrictions on the distribution of the covariate and error processes besides existence of a certain number of finite moments. In order to quantify dependence, we rely on the functional dependence measure framework introduced by \cite{Wu2005}, rather than the usual strong mixing coefficients. Comparisons between functional dependence measures and strong mixing assumptions are discussed in section \ref{sec:section 2}. For both GLSS and SIS, we present the sure screening properties and show the range of $p_n$ can vary from the high dimensional case, where $p_n$ is a power of $n$, to the ultrahigh dimensional case discussed in \cite{FanLv2008}. We detail how the range of $p_n$ and the sure screening properties are affected by the strength of dependence and the moment conditions of the errors and covariates, the strength of the underlying signal, and the sparsity level, amongst other factors. 

The rest of the paper is organized as follows: Section \ref{sec:section 2} reviews the functional and predictive dependence measures which will allow us to characterize the dependence in the covariate ($\bm{x_i}, i=1,...,n$) and error processes. We also discuss the assumptions placed on structure of the covariate and error processes; these assumptions are very mild, allowing us to represent a wide variety of stochastic processes which arise in practice. Section \ref{sec:section 3} presents the sure screening properties of SIS under a range of settings. Section \ref{sec:section 4} introduces the GLSS procedure and presents its sure screening properties. Combining these screening procedures with the adaptive Lasso will discussed in Section \ref{sec:section 5}. Section \ref{sec:section 6} covers simulation results, while section \ref{sec:section 7} discusses an application to forecasting the US inflation rate. Lastly, concluding remarks are in Section \ref{sec:section 8}, and the proofs for all the results follow in the appendix.

\section{Preliminaries}\label{sec:section 2}

We shall assume the error sequence is a strictly stationary, ergodic process with the following form:
\begin{equation}\epsilon_i=g\left(\ldots,e_{i-1},e_i\right)\label{nonlinear}\end{equation} 

\noindent Where $g(\cdot)$ is a real valued measurable function, and $e_i$ are iid random variables. This representation includes a very wide range of stochastic processes such as linear processes, their non-linear transforms, Volterra processes, Markov chain models, non-linear autoregressive models such as threshold auto-regressive (TAR), bilinear, GARCH models, among others (for more details see \cite{Wu2011},\cite{Wu2005}). This representation allows us to use the functional and predictive dependence measures introduced in \cite{Wu2005}. The functional dependence measure for the error process is defined as the following:\begin{equation}\delta_{q}(\epsilon_i)=||\epsilon_i-g\left(\mathcal{F}_i^*\right)||_q = (E|\epsilon_i-g\left(\mathcal{F}_i^*\right)|^q)^{1/q}\end{equation}where $\mathcal{F}_i^*=\left(\ldots,e_{-1},e_0^*,e_1,\ldots,e_i\right)$ with $e_0^*,e_j, j \in \mathbb{Z} $ being iid. Since we are replacing $e_0$ by $e_0^*$, we can think of this as measuring the dependency of $\epsilon_i$ on $e_0$ as we are keeping all other inputs the same. The cumulative functional dependence measure is defined as $\Delta_{m,q}(\bm{\epsilon})=\sum_{i=m}^{\infty}\delta_{q}(\epsilon_i)$. We assume weak dependence of the form:
\begin{equation}\Delta_{0,q}(\bm{\epsilon})=\sum_{i=0}^{\infty}\delta_{q}(\epsilon_i)<\infty\label{qstrongstable}\end{equation}
The predictive dependence measure is related to the functional dependence measure, and is defined as the following:\begin{equation}\theta_{q}(\epsilon_l)=||\textrm{E}\left(\epsilon_l|\mathcal{F}_{0}\right)-\textrm{E}\left(\epsilon_l|\mathcal{F}_{-1}\right)||_q= ||\mathcal{P}_0 \epsilon_{l}||_q  \end{equation} where $\mathcal{F}_i=\left(\ldots,e_{-1},e_0,e_1,\ldots,e_i\right)$ with $e_i, i \in \mathbb{Z}$ being iid. The cumulative predictive dependence measure is defined as $\Theta_{q}(\bm{\epsilon})=\sum_{l=0}^{\infty}\theta_{q}(\epsilon_l)$, and by Theorem 1 in \cite{Wu2005} we obtain $\Theta_{q}(\bm{\epsilon}) \leq \Delta_{0,q}(\bm{\epsilon})$.

Similarly the covariate process is of the form:
\begin{equation}\bm{x_i}^{(n)}=\bm{h}\left(\ldots,\bm{\eta}_{i-1}^{(n)},\bm{\eta}_i^{(n)}\right)\label{nonlinear2}\end{equation} 
Where $\bm{\eta}_i^{(n)} \in \mathcal{R}^{p_n}, i\in \mathbb{Z}$, are iid random vectors, $\bm{h(\cdot)}=(h_1(\cdot)\ldots,h_{p_n}(\cdot))$, $\bm{x_i}^{(n)}=(X_{i1},...,X_{ip_n})$ and $X_{ij}=h_j(...,\bm{\eta}_{i-1}^{(n)},\bm{\eta}_{i}^{(n)})$. The superscript $(n)$ denotes that the dimension of vectors is a function of $n$, however for presentational clarity we suppress the superscript $(n)$ from here on and use $\bm{x}_i$ and $\bm{\eta}_i$ instead. Let $\mathcal{H}_{i}^{*}=(\ldots,\bm{\eta}_{-1},\bm{\eta}^{*}_{0},\bm{\eta}_{1},\ldots,\bm{\eta}_{i})$. As before the functional dependence measure is $\delta_{q}(X_{ij})=||X_{ij}-h_{j}\left(\mathcal{H}_i^*\right)||_q $ and the cumulative dependence measure for the covariate process is defined as:
\begin{equation}\Phi_{m,q}(\bm{x})=\sum_{i=m}^{\infty}\max_{j \leq p_n}\delta_{q}(X_{ij}) < \infty \end{equation}
The representations (\ref{nonlinear}), and (\ref{nonlinear2}), along with the functional and predictive dependence measures have been used in various works including \cite{Wu2009},\cite{Wu2012}, and \cite{WuandWu2016} amongst others. Compared to strong mixing conditions, which are often difficult to verify, the above dependence measures are easier to interpret and compute since they are related to the data generating mechanism of the underlying process \cite{Wu2011}. In many cases using the functional dependence measure also requires less stringent assumptions. For example, consider the case of a linear process, $\epsilon_i=\sum_{j=0}^{\infty} f_j e_{i-j}$, with $e_i$ iid. Sufficient conditions for a linear process to be strong mixing involve: the density function of the innovations ($e_i$) being of bounded variation, restrictive assumptions on the decay rate of the coefficients ($f_j$), and invertibility of the process (see Theorem 14.9 in \cite{Davidson94} for details). Additional conditions are needed to ensure strong mixing if the innovations for the linear process are dependent \cite{Doukhan94}. 

As a result many simple processes can be shown to be non-strong mixing. A prominent example involves an AR(1) model with iid Bernoulli (1/2) innovations: $\epsilon_i=\rho\epsilon_{i-1}+e_i$ is non-strong mixing if $\rho \in (0,1/2]$ \cite{Andrews1984}. These cases can be handled quite easily in our framework, since we are not placing distributional assumptions on the innovations, $e_i$, such as the existence of a density. For linear processes with iid innovations, representation (\ref{nonlinear}) clearly holds and (\ref{qstrongstable}) is satisfied if $\sum_{j=0}^{\infty}|f_j|<\infty$. For dependent innovations, suppose we have: $e_i=h(\ldots,a_{i-1},a_i)$, where $h(\cdot)$ is a real valued measurable function and $a_i,i \in \mathbb{Z}$, are iid. Then $\epsilon_i=\sum_{j=0}^{\infty} f_j e_{i-j}$, has a causal representation, and satisfies (\ref{qstrongstable}) if: $\sum_{i=0}^{\infty}\delta_{q}(e_i)<\infty$, and $\sum_{j=0}^{\infty}|f_j|<\infty$ (see \cite{WuMin2005}). 

\section{SIS with Dependent Observations}\label{sec:section 3}
Sure Independence Screening, as introduced by Fan and Lv \cite{FanLv2008}, is a method of variable screening based on ranking the magnitudes of the $p_n$ marginal regression estimates. Under appropriate conditions, this simple procedure is shown to possess the sure screening property. The method is as follows, let: 
\begin{equation} \bm{\hat{\rho}}=(\hat{\rho}_1,\ldots,\hat{\rho}_{p_n})\textrm{, where } \hat{\rho}_j=(\sum_{t=1}^{n}X_{tj}^2)^{-1}(\sum_{t=1}^{n}X_{tj}Y_t) \end{equation}
Therefore, $\hat{\rho}_j$ is the OLS estimate of the linear projection of $Y_t$ onto $X_{tj}$. Now let \begin{equation}\mathcal{M}_*=\left\{1\leq i\leq p_n:\beta_i \neq 0\right\}\end{equation} and let $|\mathcal{M}_*| =s_n<< n$ be the size of the true sparse model. We then sort the elements of $\bm{\hat{\rho}}$ by their magnitudes. For any given $\gamma_n$, define a sub-model 
\begin{equation} \hat{\mathcal{M}}_{\gamma_n}=\left\{1\leq i\leq p_n:|\hat{\rho}_i| \geq \gamma_n \right\}\end{equation} 
\noindent and let $|\hat{\mathcal{M}}_{\gamma_n}| =d_n$ be the size of the selected model. The sure screening property states that for an appropriate choice of $\gamma_n$, we have $P\left(\mathcal{M}_*\subset \hat{\mathcal{M}}_{\gamma_{n}}\right)\rightarrow 1$. 
\newline\newline
Throughout this paper let: $Y_t=\sum_{i=1}^{p_n}X_{ti}\beta_i + \epsilon_t$, $\bm{x_t}=(X_{t1},...,X_{tp_n})$, $\Sigma=cov(\bm{x_t})$, and $\bm{X}_k$ be $k^{th}$ column of $\bm{X}$. In addition, we assume $Var(Y_t), Var(X_{tj})= O(1)$, $\forall j\leq p_n$. Note that $\bm{x}_t$ can contain lagged values of $Y_t$. Additionally, let $\rho_j=(E(X_{tj}^2))^{-1}E(X_{tj}Y_t)$, and $\mathcal{M}_{\gamma_n}=\left\{1\leq i\leq p:|\rho_i| \geq \gamma_n \right\}$. For a vector $\bm{a}=(a_1,...,a_n)$, $sgn(\bm{a})$ denotes its sign vector, with the convention that $sgn(0)=0$, and $|\bm{a}|_p^{p}=\sum_{i=1}^{n} |a_i|^p$. For a square matrix $\bm{A}$, let $\lambda_{\min}(\bm{A})$ and $\lambda_{\max}(\bm{A})$, denote the minimum eigenvalue, and maximum eigenvalue respectively. For any matrix $\bm{A}$, let $||\bm{A}||_{\infty}$, and $||\bm{A}||_2$ denote the maximum absolute row sum of $\bm{A}$, and the spectral norm of $\bm{A}$ respectively. Lastly we will use $C,c$ to denote generic positive constants which can change between instances.

\subsection{SIS with dependent, heavy tailed covariates and errors} To establish sure screening properties, we need the following conditions:
\newline\newline
\noindent \textbf{Condition A}: $|\rho_k| \geq c_1 n^{-\kappa}$ for $k \in M_* \textrm{, }\kappa < 1/2$
\newline\newline
\noindent \textbf{Condition B}: $E(\epsilon_t), E(X_{tj}), E(X_{tj}\epsilon_{t}) =0 \textrm{ }\forall j,t$. 
\newline\newline
\noindent \textbf{Condition C}: Assume the error and the covariate processes have representations (\ref{nonlinear}), and (\ref{nonlinear2}) respectively. Additionally, we assume the following decay rates $\Phi_{m,r}(\bm{x}) = O(m^{-\alpha_x}), \Delta_{m,q}(\bm{\epsilon}) =O(m^{-\alpha_{\epsilon}})$, for some $\alpha_x, \alpha_{\epsilon} > 0$, $q > 2, r > 4 $ and $\tau=\frac{qr}{q+r} > 2$. 
\newline\newline
\indent Condition \textbf{A} is standard in screening procedures, and it assumes the marginal signals of the active predictors cannot be too small. Condition \textbf{B} assumes the covariates and the errors are contemporaneously uncorrelated. This is significantly weaker than independence between the error sequence and the covariates usually assumed. Condition \textbf{C} presents the structure, dependence and moment conditions on the covariate and error processes. Notice that higher values of $\alpha_x,\alpha_{\epsilon}$ indicate weaker temporal dependence. 

Examples of error and covariate processes which satisfy Condition C are: If $\epsilon_i$ is a linear process, $\epsilon_i=\sum_{j=0}^{\infty} f_j e_{i-j}$ with $e_i$ iid and $\sum_{j=0}^{\infty}|f_j|<\infty$ then $\delta_{q}(\epsilon_i)=|f_i| ||e_{0}-e_{0}^{*}||_{q}$. If $f_i=O(i^{-\beta})$ for $\beta>1$ we have $\Delta_{m,q}=O(m^{-\beta+1})$ and $\alpha_{\epsilon}=\beta-1$. 
We have a geometric decay rate in the cumulative functional dependence measure, if $\epsilon_{i}$ satisfies the geometric moment contraction (GMC) condition, see \cite{Wu2007}. Conditions needed for a process to satisfy the GMC condition are given in Theorem 5.1 of \cite{Wu2007}. Examples of processes satisfying the GMC condition include stationary, causal finite order ARMA, GARCH, ARMA-GARCH, bilinear, and threshold autoregressive processes, amongst others (see \cite{Wu2011} for details).

For the covariate process, if we assume $\bm{x}_i$ is a vector linear process: $\bm{x}_i=\sum_{l=0}^{\infty} A_l\bm{\eta}_{i-l}$. Where $A_{l}$ are $p_n \times p_n$ coefficient matrices and $\bm{\eta}_i=(\eta_{i1},\ldots,\eta_{ip_n})$ are iid random vectors with $cov(\bm{\eta_i})=\Sigma_{\eta}$. For simplicity, assume $\eta_{i,j} (j=1,\ldots,p_n)$ are identically distributed, then 
\begin{equation} \delta_{q}(X_{ij})=||A_{i,j}\bm{\eta}_0-A_{i,j}\bm{\eta}_0^{*}||_q \leq 2|A_{i,j}|||\eta_{0,1}||_q \end{equation}
where $A_{i,j}$ is the $j^{th}$ column of $A_i$. If $||A_i||_{\infty}=O(i^{-\beta})$ for $\beta>1$, then $\Phi_{m,q}=O(m^{-\beta+1})$.

In particular for stable VAR(1) processes, $\bm{x}_t=B_1\bm{x}_{t-1} + \bm{\eta}_t$, $\Phi_{m,q}(\bm{x})=O(||\tilde{B}_1||_2^{m})$ \cite{chen2013}. For stable VAR($k$) processes, $\bm{x}_t=\sum_{i=1}^{k} B_i\bm{x}_{t-i} + \bm{\eta}_t$, we can rewrite this as a VAR(1) process, $\tilde{\bm{x}}_t=\tilde{B}_1\tilde{\bm{x}}_{t-1} + \tilde{\bm{\eta}}_t$, with:

\begin{align}
\tilde{\bm{x}}_t &= \begin{bmatrix}
           \bm{x}_{t} \\
           \bm{x}_{t-1} \\
           \vdots \\
           \bm{x}_{t-k+1}
         \end{bmatrix}_{kp \times 1}\textrm{   }         
\tilde{B}_1
  = \left(
\begin{matrix}
      B_1 & \cdots & B_{k-1} & B_k \\
      I_{p_n} & \cdots & \bm{0} & \bm{0}  \\
      \vdots & \ddots & \vdots & \vdots \\
      \bm{0} & \cdots & I_{p_n} & \bm{0}\\
   \end{matrix}\right)_{kp \times kp}
\tilde{\bm{\eta}}_t &= \begin{bmatrix}
           \bm{\eta}_{t} \\
           \bm{0} \\
           \vdots \\
           \bm{0}
         \end{bmatrix}_{kp \times 1}
\end{align}

And by section 11.3.2 in \cite{Lutkepohl}, the process $\tilde{\bm{x}}_t$ is stable if and only if $\bm{x}_t$ is stable. Therefore if $\tilde{B}_1$ is diagonalizable, we have $O(a^{m})$, where $a$ represents the largest eigenvalue in magnitude of $\tilde{B}_1$. And by the stability of $\bm{x}_t$, $a \in (0,1)$. Additional examples of error and covariate processes which satisfy Condition C are given in \cite{Wu2009} and \cite{WuandWu2016} respectively.

Define $\alpha = min(\alpha_x,\alpha_{\epsilon})$ and let $\omega =1$ if $\alpha_x > 1/2 - 2/r$, otherwise $\omega= r/4 - \alpha_x r/2$. Let $\iota = 1$ if $\alpha > 1/2-1/\tau$, otherwise $\iota = \tau/2-\tau\alpha$. Additionally, let $K_{\epsilon,q}=\sup_{m \geq 0} (m+1)^{\alpha_{\epsilon}}\Delta_{m,q}(\bm{\epsilon})$ and $K_{x,r}=\max_{j \leq p_n} \sup_{m \geq 0} (m+1)^{\alpha_x} \sum_{i=m}^{\infty}\delta_{r}(X_{ij})$. Given Condition C, it follows that $K_{\epsilon,q}, K_{x,r} < \infty$. For ease of presentation we let:
\begin{align}
\vartheta_n=\frac{s_nn^{\omega} K_{x,r}^{r}}{(n/s_n)^{r/2-r\kappa/2}}+\frac{n^{\iota} K_{x,r}^{\tau}K_{\epsilon,q}^{\tau}}{n^{\tau-\tau\kappa}} + \exp\left(-\frac{n^{1-2\kappa}}{s_n^2K_{x,r}^4}\right)+ \exp\left(-\frac{n^{1-2\kappa}}{K_{x,r}^2K_{\epsilon,q}^2}\right)
\end{align}

The following theorem gives the sure screening properties, and provides a bound on the size of the selected model:
\begin{thm}{} Suppose Conditions A,B,C hold.  
\begin{enumerate}[(i)]
\item For any $c_2 > 0$, we have:
\begin{align*} P\left(\max_{j \leq p_n}|\hat{\rho}_j -\rho_j|>c_2 n^{-\kappa}\right)
&\leq O(p_n\vartheta_n)
\end{align*}
\item For $\gamma_n = c_3 n^{-\kappa}$ with $c_3 \leq c_1/2$, we have:
\begin{align*} P\left(\mathcal{M}_*\subset \hat{\mathcal{M}}_{\gamma_n} \right)&\geq 1 - O(s_n\vartheta_n)
\end{align*}
\item For $\gamma_n = c_3 n^{-\kappa}$ with $c_3 \leq c_1/2$, we have:
\begin{align*} P\left(|\hat{\mathcal{M}}_{\gamma_n}| \leq O(n^{2\kappa}\lambda_{max}(\Sigma))\right)
&\geq 1- O(p_n\vartheta_n)
\end{align*}
\end{enumerate}
\label{theorem2} 
\end{thm}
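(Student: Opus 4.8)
The plan is to treat part (i) as the analytic core and to obtain parts (ii) and (iii) as essentially deterministic consequences of the uniform deviation bound it provides. I would first linearize the ratio defining $\hat{\rho}_j$. Writing $\hat a_j=\tfrac1n\sum_t X_{tj}Y_t$, $\hat b_j=\tfrac1n\sum_t X_{tj}^2$, and $a_j,b_j$ for their expectations, one has $\hat{\rho}_j-\rho_j=(\hat a_j b_j-a_j\hat b_j)/(\hat b_j b_j)$. Since Condition B forces mean-zero covariates, $b_j=\Var(X_{tj})$ is bounded above and below by constants and $|a_j|=|\E(X_{tj}Y_t)|=O(1)$; hence on the event $\{\hat b_j\geq b_j/2\}$ one has $|\hat{\rho}_j-\rho_j|\leq C(|\hat a_j-a_j|+|\hat b_j-b_j|)$. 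This gives the reduction
\[\{|\hat{\rho}_j-\rho_j|>c_2n^{-\kappa}\}\subseteq\{|\hat a_j-a_j|>cn^{-\kappa}\}\cup\{|\hat b_j-b_j|>cn^{-\kappa}\},\]
because the bad-denominator event $\{\hat b_j<b_j/2\}$ is itself contained in $\{|\hat b_j-b_j|>b_j/2\}$, a larger deviation whose probability is dominated. Everything then reduces to concentration of two centered averages.

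Substituting $Y_t=\sum_{i\in\mathcal{M}_*}X_{ti}\beta_i+\epsilon_t$ and using $\E(X_{tj}\epsilon_t)=0$, I split
\[\hat a_j-a_j=\sum_{i\in\mathcal{M}_*}\beta_i\Big(\tfrac1n\sum_t X_{tj}X_{ti}-\E(X_{tj}X_{ti})\Big)+\tfrac1n\sum_t X_{tj}\epsilon_t.\]
The first piece (which also subsumes $\hat b_j-b_j$, the case $i=j$) is a weighted sum over the $s_n$ active predictors of centered quadratic processes $X_{tj}X_{ti}$; by the product rule for functional dependence measures these products have finite moment of order $r/2$, which is exactly why Condition C imposes $r>4$. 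The second piece is the centered cross process $X_{tj}\epsilon_t$, whose $\tau$-norm is controlled by H\"older, $\|X_{tj}\epsilon_t\|_\tau\leq\|X_{tj}\|_r\|\epsilon_t\|_q$ with $\tau=qr/(q+r)>2$. To each average I would apply a Nagaev-type inequality for functionally dependent sequences together with its companion exponential bound; each such inequality splits the tail at the deviation $x\asymp n^{1-\kappa}$ into a polynomial part and a sub-Gaussian part. For the quadratic piece this yields the first term $s_n n^{\omega}K_{x,r}^{r}/(n/s_n)^{r/2-r\kappa/2}$ and the exponential $\exp(-n^{1-2\kappa}/(s_n^2K_{x,r}^4))$, the $s_n$ factors coming from the weighted sum over $\mathcal{M}_*$ and the union bound over active indices, and the exponent $\omega$ from whether $\alpha_x$ exceeds the threshold $1/2-2/r$. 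For the cross piece it yields the second term $n^{\iota}K_{x,r}^{\tau}K_{\epsilon,q}^{\tau}/n^{\tau-\tau\kappa}$ and $\exp(-n^{1-2\kappa}/(K_{x,r}^2K_{\epsilon,q}^2))$, with $\iota$ governed by $\alpha=\min(\alpha_x,\alpha_\epsilon)$ relative to $1/2-1/\tau$. Summing the four contributions gives the per-coordinate bound $\vartheta_n$, and a union bound over $j\leq p_n$ produces $O(p_n\vartheta_n)$, proving (i).

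Part (ii) is then immediate: choosing $c_2$ so that $c_1-c_2\geq c_3$, on the event of (i) every $k\in\mathcal{M}_*$ satisfies $|\hat{\rho}_k|\geq|\rho_k|-|\hat{\rho}_k-\rho_k|\geq c_1n^{-\kappa}-c_2n^{-\kappa}\geq\gamma_n$, so $\mathcal{M}_*\subset\hat{\mathcal{M}}_{\gamma_n}$; since sure screening only needs the $s_n$ coordinates in $\mathcal{M}_*$ controlled, the union bound runs over $\mathcal{M}_*$ and the failure probability is $O(s_n\vartheta_n)$. For part (iii), on the same event every selected index obeys $|\rho_j|\geq(c_3-c_2)n^{-\kappa}\asymp n^{-\kappa}$, so with $d=|\hat{\mathcal{M}}_{\gamma_n}|$ one gets $d\,c^2n^{-2\kappa}\leq\sum_j\rho_j^2$. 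Because $\rho_j=(\Sigma\bm{\beta})_j/\Sigma_{jj}$ under Condition B with $\Sigma_{jj}=O(1)$ bounded below, $\sum_j\rho_j^2\lesssim|\Sigma\bm{\beta}|_2^2=\bm{\beta}^T\Sigma^2\bm{\beta}\leq\lambda_{\max}(\Sigma)\,\bm{\beta}^T\Sigma\bm{\beta}$, and $\bm{\beta}^T\Sigma\bm{\beta}=\Var(\sum_i X_{ti}\beta_i)\leq\Var(Y_t)=O(1)$; hence $d=O(n^{2\kappa}\lambda_{\max}(\Sigma))$ on an event of probability $\geq1-O(p_n\vartheta_n)$. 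The main obstacle is the second paragraph: establishing the correct Nagaev-type and exponential inequalities for the dependent product sequences $X_{tj}X_{ti}$ and $X_{tj}\epsilon_t$ — in particular bounding their functional dependence measures through a product rule and then tracking both the regime-dependent exponents $\omega,\iota$ and the $s_n$ bookkeeping so that the four contributions assemble exactly into $\vartheta_n$.
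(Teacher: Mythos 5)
Your proposal is correct and follows essentially the same route as the paper: the same ratio decomposition into numerator/denominator deviation events, the same split of $\hat a_j-a_j$ into the quadratic part over $\mathcal{M}_*$ and the cross term $X_{tj}\epsilon_t$, the same product-rule/H\"older bounds on the cumulative functional dependence measures followed by the Nagaev-type plus exponential inequalities of Wu and Wu, and the same triangle-inequality and counting arguments for parts (ii) and (iii). The only cosmetic difference is that you derive the bound $\sum_j\rho_j^2=O(\lambda_{\max}(\Sigma))$ from scratch via $\rho_j=(\Sigma\bm{\beta})_j/\Sigma_{jj}$ and $\bm{\beta}^T\Sigma\bm{\beta}\leq\Var(Y_t)$, whereas the paper cites this fact from prior work on screening with heavy-tailed data.
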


In Theorem \ref{theorem2} we have two types of bounds, for large $n$ the polynomial terms dominate, whereas for small values of $n$ the exponential terms dominate. The covariate dimension ($p_n$) can be as large as $o(min(\frac{s_n(n/s_n)^{r/2-r\kappa/2}}{n^{\omega}},\frac{n^{\tau-\tau\kappa}}{n^{\iota}}))$. The range of $p_n$ depends on the dependence in both the covariate and the error processes, the strength of the signal ($\kappa$), the moment condition, and the sparsity level ($s_n$). If we assume $s_n=O(1), r=q$, and $\alpha \geq 1/2 - 2/r$ then $p_n=o(n^{r/2-r\kappa/2-1})$. For the case of iid errors and covariates, we would replace $K_{x,r}, K_{\epsilon,q}$ in Theorem \ref{theorem2} with $\max_{j\leq p_n} ||X_{ij}||_{r/2}$ and $||\epsilon_i||_q$ respectively. Therefore for the case of weaker dependence in the covariate and error processes (i.e. $\alpha_x >1/2 - 2/r$ and $\alpha >1/2 - 1/\epsilon$), our range for $p_n$ is reduced only by a constant factor. However, our range for $p_n$ is significantly reduced in the case of stronger dependence in the error or covariate processes (i.e. either $\alpha_x <1/2 - 2/r$ or $\alpha_{\epsilon} < 1/2 - 2/q$). For instance if $\alpha_x=\alpha_{\epsilon}$ and $q=r$, our range for $p_n$ is reduced by a factor of $n^{r/4-\alpha r/2}$ in the case of stronger dependence.

In the iid setting, to achieve sure screening in the ultrahigh dimensional case, \cite{FanLv2008} assumed the covariates and errors are jointly normally distributed. Future works applicable to the linear model, such as \cite{FanSong2010},\cite{Feng2011} among others, relaxed this Gaussian assumption, but generally assumed the tails of the covariates and errors are exponentially light. Compared to the existing results for iid observations, our moment conditions preclude us from dealing with the ultrahigh dimensional case. However, our setting is far more general in that it allows for dependent and heavy tailed covariates and errors. In addition, we allow for the covariates and error processes to be dependent on each other, with the mild restriction that $ E(X_{tj}\epsilon_{t}) =0$, $\forall j \leq p_n$.

\subsection{Ultrahigh Dimensionality under dependence}

It is possible to achieve the sure screening property in the ultrahigh dimensional setting with dependent errors and covariates. However, we need to make stronger assumptions on the moments of both the error and covariate processes. Until now we have assumed the existence of a finite $qth$ moment, which restricted the range of $p$ to a power of $n$. If the error and covariate processes are assumed to follow a stronger moment condition, such as $\Delta_{0,q}(\bm{\epsilon})<\infty$ and $\Phi_{0,q}(\bm{x}) < \infty$ for arbitrary $q>0$, we can achieve a much larger range of $p_n$ which will cover the ultrahigh dimensional case discussed in \cite{FanLv2008}. More formally, we have:
\newline\newline
\noindent \textbf{Condition D}: Assume the error and the covariate processes have representations (\ref{nonlinear}), and (\ref{nonlinear2}) respectively. Additionally assume $\upsilon_x=\sup_{q \geq 2} q^{-\tilde{\alpha}_x}\Phi_{0,q}(\bm{x}) < \infty$ and $\upsilon_{\epsilon}=\sup_{q \geq 2} q^{-\tilde{\alpha}_{\epsilon}} \Delta_{0,q}(\bm{\epsilon})< \infty$, for some $\tilde{\alpha}_x,\tilde{\alpha}_{\epsilon} \geq 0.$
\newline\newline
By Theorem 3 in \cite{WuandWu2016}, Condition D implies the tails of the covariate and error processes are exponentially light. There are a wide range of processes which satisfy the above condition. For example, if $\epsilon_i$ is a linear process: $\epsilon_i=\sum_{j=0}^{\infty} f_j e_{i-j}$ with $e_i$ iid and $\sum_{l=0}^{\infty}|f_l|<\infty$ then $\Delta_{0,q}(\epsilon_l)= ||e_{0}-e_{0}^{*}||_{q}\sum_{l=0}^{\infty}|f_l|$. If we assume $e_0$ is sub-Gaussian, then $\tilde{\alpha}_{\epsilon}=1/2$, since $||e_{0}||_{q} = O(\sqrt{q})$. Similarly if $e_i$ is sub-exponential we have $\tilde{\alpha}_{\epsilon}=1$. More generally, for $\epsilon_i=\sum_{j=0}^{\infty}f_je_{i-j}^{p}$, if $e_i$ is sub-exponential, we have $\tilde{\alpha}_{\epsilon} = p$. Similar results hold for vector linear processes discussed previously. 

Condition D is primarily a restriction on the rate at which $||\epsilon_i||_q, \max_{j\leq p_n}||X_{ij}||_q$ increase as $q \rightarrow \infty$. We remark that, for any fixed $q$, we are not placing additional assumptions on the temporal decay rate of the covariate and error processes besides requiring $\Delta_{0,q}(\bm{\epsilon})$,$\Phi_{0,q}(\bm{x}) < \infty$. In comparison, in the ultrahigh dimensional setting, \cite{Chenlu2017} requires geometrically decaying strong mixing coefficients, in addition to requiring sub-exponential tails for the response. As an example, if we assume $\epsilon_i=\sum_{j=0}^{\infty} f_j e_{i-j}$, geometrically decaying strong mixing coefficients would require the coefficients, $f_j$, to decay geometrically. Whereas in Condition D, the only restrictions we place on the coefficients, $f_j$, is absolute summability.

\begin{thm} Suppose Conditions A,B,D hold. Define $\tilde{\alpha}'=\frac{2}{1+2\tilde{\alpha}_x+2\tilde{\alpha}_{\epsilon}}$, and $\tilde{\alpha} = \frac{2}{1+4\tilde{\alpha}_x}$.
\begin{enumerate}[(i)]
\item For any $c_2 > 0 $ we have:
\begin{align*} P\left(\max_{j \leq p_n}|\hat{\rho}_j -\rho_j|>c_2 n^{-\kappa}\right)
\leq & O\left(s_np_n\exp\left(-\frac{n^{1/2-\kappa}}{\upsilon_x^{2}s_n}\right)^{\tilde{\alpha}}\right) \nonumber \\
& + O\left(p_n\exp\left(-\frac{n^{1/2-\kappa}}{\upsilon_x\upsilon_{\epsilon}}\right)^{\tilde{\alpha}'}\right)
\end{align*}
\item For $\gamma_n = c_3 n^{-\kappa}$ with $c_3 \leq c_1/2$, we have:
\begin{align*} P\left(\mathcal{M}_*\subset \hat{\mathcal{M}}_{\gamma_n} \right)\geq 1 &- O\left(s_n^2 \exp\left(-\frac{n^{1/2-\kappa}}{\upsilon_x^{2}s_n}\right)^{\tilde{\alpha}}\right)\\
&- O\left(s_n\exp\left(-\frac{n^{1/2-\kappa}}{\upsilon_x\upsilon_{\epsilon}}\right)^{\tilde{\alpha}'}\right)
\end{align*}
\item For $\gamma_n = c_3 n^{-\kappa}$ with $c_3 \leq c_1/2$, we have:
\begin{align*} P\left(|\hat{\mathcal{M}}_{\gamma_n}| \leq O(n^{2\kappa}\lambda_{max}(\Sigma))\right)
\geq  1&- O\left(s_np_n\exp\left(-\frac{n^{1/2-\kappa}}{\upsilon_x^{2}s_n}\right)^{\tilde{\alpha}}\right) \nonumber \\
& - O\left(p_n\exp\left(-\frac{n^{1/2-\kappa}}{\upsilon_x\upsilon_{\epsilon}}\right)^{\tilde{\alpha}'}\right)
\end{align*}

\end{enumerate}
\label{theorem4} 
\end{thm}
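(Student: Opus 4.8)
The plan is to follow the skeleton of the proof of Theorem~\ref{theorem2}, replacing the Nagaev-type polynomial inequality by an exponential inequality that becomes available under the stronger moment growth of Condition~D. The first step is to reduce all three parts to the concentration of a few normalized sums. Writing $\hat a_j = n^{-1}\sum_t X_{tj}^2$, $a_j = E(X_{tj}^2)$, $\hat b_j = n^{-1}\sum_t X_{tj}Y_t$ and $b_j = E(X_{tj}Y_t)$, the identity
$$\hat\rho_j - \rho_j = \frac{a_j(\hat b_j - b_j) - b_j(\hat a_j - a_j)}{a_j \hat a_j}$$
together with $\Var(X_{tj}) = O(1)$ bounded away from $0$ shows that, after controlling the denominator, the event $\{|\hat\rho_j - \rho_j| > c_2 n^{-\kappa}\}$ is contained in the union of events that $|\hat a_j - a_j|$ and $|\hat b_j - b_j|$ exceed fixed multiples of $n^{-\kappa}$. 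Expanding $Y_t = \sum_{i\in\mathcal M_*} X_{ti}\beta_i + \epsilon_t$ splits $\hat b_j - b_j$ into a covariate-covariate part $\sum_{i\in\mathcal M_*}\beta_i\, n^{-1}\sum_t (X_{tj}X_{ti} - E(X_{tj}X_{ti}))$ and a covariate-error part $n^{-1}\sum_t X_{tj}\epsilon_t$, which is centered by Condition~B; the term $\hat a_j - a_j$ is of the same covariate-covariate type with $i=j$.

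The heart of the argument is an exponential inequality for these product processes. For $W_t = X_{tj}X_{ti}$ the functional dependence measure satisfies $\delta_q(W_t) \le \|X_{tj}\|_{2q}\,\delta_{2q}(X_{ti}) + \|X_{ti}\|_{2q}\,\delta_{2q}(X_{tj})$ by H\"older's inequality, so under Condition~D the cumulative measure of $W_t$, uniformly in $i,j$, grows like $q^{2\tilde\alpha_x}$; similarly $X_{tj}\epsilon_t$ grows like $q^{\tilde\alpha_x+\tilde\alpha_\epsilon}$. Feeding these sub-Weibull growth rates into the exponential moment bound for dependent sums (Theorem~3 of \cite{WuandWu2016}) yields tails of the form $\exp(-(n^{1/2}x/\text{scale})^{\gamma})$ with $\gamma = \tilde\alpha$ for covariate-covariate sums and $\gamma = \tilde\alpha'$ for covariate-error sums, the two exponents being exactly $1/(\mu+\tfrac12)$ for the moment exponents $\mu = 2\tilde\alpha_x$ and $\mu = \tilde\alpha_x + \tilde\alpha_\epsilon$. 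Taking $x \asymp n^{-\kappa}$ produces the $n^{1/2-\kappa}$ in the statement. The sparsity level enters twice in the covariate-covariate branch: the bound $|\sum_{i\in\mathcal M_*}\beta_i D_i| \le (\sum_i|\beta_i|)\max_i|D_i|$ inflates the effective deviation scale by $s_n$ inside the exponent (giving the $\upsilon_x^2 s_n$), and a union bound over the $s_n$ active coordinates contributes the extra $s_n$ prefactor, while the covariate-error branch carries scale $\upsilon_x\upsilon_\epsilon$ and no $s_n$. A final union bound over $j \le p_n$ supplies the $p_n$ factors, proving~(i).

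Part~(ii) follows by choosing $c_2 = c_1/2$: off the event in~(i), Condition~A gives $|\hat\rho_k| \ge |\rho_k| - |\hat\rho_k-\rho_k| \ge (c_1/2) n^{-\kappa} = \gamma_n$ for every $k\in\mathcal M_*$, hence $\mathcal M_* \subset \hat{\mathcal M}_{\gamma_n}$; since the union bound now ranges only over $k\in\mathcal M_*$, the $p_n$ prefactors of~(i) are replaced by $s_n$, yielding the $s_n^2$ and $s_n$ factors. For~(iii) I would first bound the population model size: since Condition~B gives $\text{cov}(\bm x_t, Y_t) = \Sigma\beta$, one has $\sum_j \text{cov}(X_{tj},Y_t)^2 = \beta^\top \Sigma^2 \beta \le \lambda_{\max}(\Sigma)\,\beta^\top\Sigma\beta \le \lambda_{\max}(\Sigma)\,\Var(Y_t) = O(\lambda_{\max}(\Sigma))$, and because $\Var(X_{tj})$ is bounded below this forces $\sum_j \rho_j^2 = O(\lambda_{\max}(\Sigma))$, so $|\{j : |\rho_j| \ge \gamma_n/2\}| = O(n^{2\kappa}\lambda_{\max}(\Sigma))$. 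On the event from~(i) every selected index satisfies $|\rho_j| \ge \gamma_n/2$, so $\hat{\mathcal M}_{\gamma_n}$ lies inside this population set, giving the size bound with the $p_n$-scale failure probability.

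I expect the main obstacle to be the exponential concentration step: verifying rigorously that the product processes $X_{tj}X_{ti}$ and $X_{tj}\epsilon_t$ inherit sub-Weibull moment growth with the correct exponents $2\tilde\alpha_x$ and $\tilde\alpha_x+\tilde\alpha_\epsilon$, controlling their cumulative functional dependence measures uniformly in $j$, and then tracking how the sparsity level $s_n$ propagates through both the scale parameter and the prefactors so that the exponents $\tilde\alpha$ and $\tilde\alpha'$ and the $\upsilon_x^2 s_n$, $\upsilon_x\upsilon_\epsilon$ scales emerge exactly as stated.
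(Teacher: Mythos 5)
Your proposal is correct and follows essentially the same route as the paper's proof: the same reduction to concentration of the product sums $X_{tj}X_{ti}$ and $X_{tj}\epsilon_t$, the same H\"older bound on their cumulative functional dependence measures under Condition D (growth rates $q^{2\tilde{\alpha}_x}$ and $q^{\tilde{\alpha}_x+\tilde{\alpha}_{\epsilon}}$), and the same application of Theorem 3 of Wu and Wu (2016) producing the exponents $\tilde{\alpha}$ and $\tilde{\alpha}'$, with the sparsity factor and union bounds entering exactly as in the paper. The only cosmetic differences are your ratio identity for $\hat{\rho}_j-\rho_j$ in place of the paper's three-term split of $T_2/T_1-E(T_2)/E(T_1)$, and your more explicit derivation of $\sum_j \rho_j^2=O(\lambda_{\max}(\Sigma))$ in part (iii), which the paper merely asserts.
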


From Theorem \ref{theorem4}, we infer the covariate dimension ($p_n$) can be as large as \\ $o(\min[\exp\left(\frac{Cn^{1/2-\kappa}}{s_n}\right)^{\tilde{\alpha}}/s_n, \exp(Cn^{1/2-\kappa})^{\tilde{\alpha}'}])$. As in Theorem \ref{theorem2}, the range of $p_n$ depends on the dependence in both the covariate and the error processes, the strength of the signal ($\kappa$), the moment condition, and the sparsity level ($s_n$). For the case of iid covariates and errors, we would replace $\upsilon_x$ and $\upsilon_{\epsilon}$ with $\mu_{r/2}=\max_{j\leq p_n} ||X_{ij}||_{r/2}$ and $||\epsilon_i||_q$ respectively. In contrast to Theorem \ref{theorem2}, temporal dependence affects our range of $p_n$ only by a constant factor. 

If we assume $s_n=O(1)$, and both the covariate and error processes are sub-Gaussian we obtain $p_n=o(\exp(n^{\frac{1-2\kappa}{3}}))$, while for sub-exponential distributions we obtain $p_n=o(\exp(n^{\frac{1-2\kappa}{5}}))$. In contrast, Fan and Lv \cite{FanLv2008}, assuming independent observations, allow for a larger range $p_n=o(\exp(n^{1-2\kappa})$. However, their work relied critically on the Gaussian assumption. Fan and Song \cite{FanSong2010}, relax the Gaussian assumption by allowing for sub-exponential covariates and errors, and our rates are similar to theirs up to a constant factor. Additionally, in our work we relax the sub-exponential assumption, provided the tails of the covariates and errors are exponentially light.

\section{Generalized Least Squares Screening (GLSS)}\label{sec:section 4}

Consider the marginal model:
\begin{equation} Y_t=X_{tk}\rho_k + \epsilon_{t,k} \label{marg}\end{equation}
where $\rho_k$ is the linear projection of $y_t$ onto $X_{tk}$. In SIS, we rank the magnitudes of the OLS estimates of this projection. In a time series setting, if we are considering the marginal model (\ref{marg}) it is likely the case that the marginal errors ($\epsilon_{t,k}$) will be serially correlated. This holds even if we assume that the errors ($\epsilon_t$) in the full model (\ref{eq1}) are serially uncorrelated. A procedure which accounts for this serial correlation, such as Generalized Least Squares (GLS), will provide a more efficient estimate of $\rho_k$. 
\begin{figure}
\includegraphics[scale=.5]{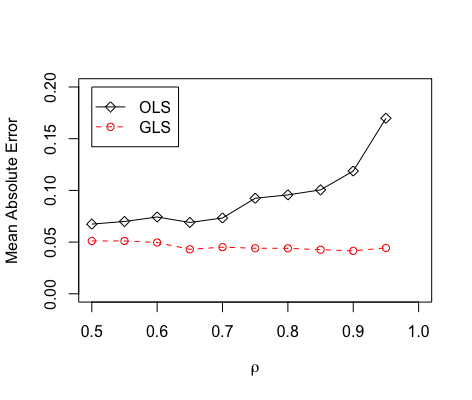} 
\caption{GLS vs OLS error comparison for values of $\rho$ between .5 and .95 incrementing by .05. Absolute error averaged over 200 replications.}
\label{figure1}
\end{figure}

We first motivate our method by considering a simple univariate model. Assume $Y_t=\beta X_{t}+\epsilon_{t}$ and the errors follow an AR(1) process, $\epsilon_t=\rho\epsilon_{t-1}+\theta_t$, where $\theta_t$, and $X_t$ are iid standard Gaussian. We set $\beta=.5$, $n=200$, and estimate the model using both OLS and GLS for values of $\rho$ ranging from .5 to .95. The mean absolute errors for both procedures is plotted in figure \ref{figure1}. We observe that the performance of OLS steadily deteriorates for increasing values of $\rho$, while the performance of GLS stays constant. This suggests that a screening procedure based on GLS estimates will be most useful in situations where we have weak signals and high levels of serial correlation.
 
 The infeasible GLS estimate for $\rho_k$ is:
\begin{equation} \tilde{\beta}_k^{M} =(\bm{X}_k^{T}\Sigma_k^{-1}\bm{X}_k)^{-1} \bm{X}_k^T\Sigma_k^{-1}\bm{y} \end{equation} 
Where $\bm{X}_k$ is the $k^{th}$ column of $\bm{X}$, and $\Sigma_k=(\gamma_{i-j,k})_{1 \leq i,j \leq n}$ is the auto-covariance matrix of $\bm{\epsilon}^k=(\epsilon_{t,k}, t=1,...,n)$. 
Given that $\Sigma_k$ needs to be estimated to form our GLS estimates, we use the banded autocovariance matrix estimator introduced in \cite{Wu2009}, which is defined as:
\begin{equation} \hat{\Sigma}_{k,l_n}= \left(\hat{\gamma}_{i-j,k} \mathbbm{1}_{|i-j| \leq l_n}\right)_{1\leq i,j \leq n}\end{equation}
Where $l_n$ is our band length, $\hat{\gamma}_{r,k}=\frac{1}{n}\sum_{t=1}^{n-|r|} \hat{\epsilon}_{t,k}\hat{\epsilon}_{t+|r|,k}$, with $\hat{\epsilon}_{t,k}=y_t-X_{tk}\hat{\rho}_k$, and $\hat{\rho}_k$ is the OLS estimate of $\rho_k$. Our GLS estimator is now:
\begin{equation} \hat{\beta}_k^{M} =(\bm{X}_k^{T}\hat{\Sigma}_{k,l_n}^{-1}\bm{X}_k)^{-1} \bm{X}_k^T\hat{\Sigma}_{k,l_n}^{-1}\bm{y} \label{est}\end{equation} 

When $E(\bm{\epsilon}^k|\bm{X}_k)=0$, by the Gauss-Markov theorem it is clear that $\tilde{\beta}_k^M$ is efficient relative to the OLS estimator. \cite{amemiya1973} showed that under non-stochastic regressors and appropriate conditions on the error process, a two stage sieve type GLS estimator has the same limiting distribution as the infeasible GLS estimator $\tilde{\beta}_k^M$. In the appendix, we provide the appropriate conditions under which our GLS estimator, $\hat{\beta}_k^M$, and the infeasible GLS estimate, $\tilde{\beta}_k^M$, have the same asymptotic distribution.

For positive definite $\Sigma_k$, the banded estimate for $\Sigma_k$ is not guaranteed to be positive definite, however it is asymptotically positive definite (see Lemma \ref{lem2}). For small samples, we can preserve positive definiteness by using the tapered estimate: $\hat{\Sigma}_{k} * R_{l_n}$, where $R_{l_n}$ is a positive definite kernel matrix, and $*$ denotes coordinate-wise multiplication. For example, we can choose $R_{l_n}=(\max(1-\frac{|i-j|}{l_n},0))_{1\leq i,j \leq n}$. 
We need the following conditions for the sure screening property to hold:
\newline\newline
\noindent \textbf{Condition E}: Assume the marginal error process, $\epsilon_{t,k}$, is a stationary AR($L_k$) process, $\epsilon_{t,k}=\sum_{i=1}^{L_k} \alpha_i \epsilon_{t-i,k}+e_t$. Where $L_k <K<\infty$, $\forall k \leq p_n$. 
\newline\newline
\noindent \textbf{Condition F}: For $k \in M_* \textrm{, } \kappa < 1/2$: $\beta_{k}^M=E(y_t-\sum_{i=1}^{L_k} \alpha_iy_{t-i})(X_{t,k}-\sum_{i=1}^{L_k} \alpha_iX_{t-i,k})/(E(X_{t,k}-\sum_{i=1}^{L_k} \alpha_iX_{t-i,k})^2) \geq c_6n^{-\kappa}$. 
\newline\newline
\noindent \textbf{Condition G}: Assume $E(X_{tk}),E(\epsilon_{t}),E(\bm{X}_k^T\Sigma_k^{-1}\bm{\epsilon})=0$  
\newline\newline
\noindent \textbf{Condition H}: Assume $\epsilon_{t,k}$, $\epsilon_t$ are of the form (\ref{nonlinear}), and the covariate process is of the form (\ref{nonlinear2}). Additionally we assume the following decay rates $\Delta_{m,q}(\bm{\epsilon})=O(m^{-\alpha_{\epsilon}})$, $\Phi_{m,r}(\bm{x}) = O(m^{-\alpha_x}), \chi_{m,q'} =\sum_{i=m}^{\infty}\max_{k \leq p_n}\delta_{q}(\epsilon_{i,k}) = O(m^{-\alpha})$, for some $\alpha_x,\alpha_{\epsilon} > 0$, $\alpha = min(\alpha_x,\alpha_{\epsilon})$, and $q'=\min(q,r) \geq 4$.
\newline\newline
\noindent \textbf{Condition I}: Assume $\epsilon_{t,k}$, $\epsilon_t$ are of the form (\ref{nonlinear}), and the covariate process is of the form (\ref{nonlinear2}). Additionally assume $\upsilon_x=\sup_{q \geq 4} q^{-\tilde{\alpha}_x}\Phi_{0,q}(\bm{x}) < \infty$ \\ 
,$\upsilon_{\epsilon}=\sup_{q \geq 4} q^{-\tilde{\alpha}_{\epsilon}} \Delta_{0,q}(\bm{\epsilon})< \infty, \phi=\sup_{q \geq 4} q^{-\varphi}\chi_{0,q} < \infty$ for some $\tilde{\alpha}_x,\tilde{\alpha}_{\epsilon} \geq 0$, and $\varphi=\max(\tilde{\alpha}_{\epsilon},\tilde{\alpha}_{x})$.
\newline\newline
\indent In Condition E, we can let the band length $K$ diverge to infinity at a slow rate, e.g $O(\log(n))$, for simplicity we set $K$ to be a constant. Assuming a finite order AR model for the marginal error process is reasonable in most practical situations, since any stationary process with a continuous spectral density function can be approximated arbitrarily closely by a finite order linear AR process (see corollary 4.4.2 in \cite{Davis91}). For further details on linear AR approximations to stationary processes, see \cite{amemiya1973} and \cite{buhlman95}. We remark that compared to previous works \cite{amemiya1973,Koreisha2001}, knowledge about the structure of the marginal errors is not necessary in estimating $\beta_k^{M}$, since we use a non-parametric estimate of $\Sigma_k$. Therefore Condition E is assumed strictly for technical reasons.

For Condition F, from (\ref{marg}), we have $\beta_k^{M}=\rho_k$, iff $E(\epsilon_{t,k}-\sum_{i=1}^{L_k} \alpha_i\epsilon_{t-i,k})(X_{t,k}-\sum_{i=1}^{L_k} \alpha_iX_{t-i,k})=0$. When $\beta_k^{M} \neq \rho_k$, recall that: 
\begin{equation} \beta_{k}^M=E(y_t-\sum_{i=1}^{L_k} \alpha_iy_{t-i})(X_{t,k}-\sum_{i=1}^{L_k} \alpha_iX_{t-i,k})/(E(X_{t,k}-\sum_{i=1}^{L_k} \alpha_iX_{t-i,k})^2) \end{equation}
If we assume the cross covariance, $\gamma_{\bm{X}_k,Y}(h)$, is proportional to $E(X_{tk}Y_t)$, i.e. $\gamma_{\bm{X}_k,Y}(h) \propto E(X_{tk}Y_t)$, for $h\in \{-L_k,\ldots,-1,1,\ldots,L_k\} $, then $\beta_k^M \propto \rho_k$ whenever $|\beta_k^M|>0$. And for $|\rho_k|>0$, it is likely the case that $\beta_k^M \propto \rho_k$ if we assume $\gamma_{\bm{X}_k,Y}(h) \propto E(X_{tk}Y_t),\textrm{ for } h\in \{-L_k,\ldots,-1,1,\ldots,L_k\}$. When $\beta_k^{M} \neq \rho_k$, we believe the advantage in using GLSS is due to the GLS estimator being robust to serial correlation in the marginal error process (see the appendix for details).

For Condition H, since $\epsilon_{t,k}=Y_t-X_{tk}\rho_k $, we have $\epsilon_{t,k}=r_k(\ldots,\bm{\theta}_{t-1},\bm{\theta}_t)$, where $r_k(\cdot)$ is a measurable function and $\bm{\theta}_t=(\bm{\eta}_t,e_t)$. If we assume $e_t$, and $\bm{\eta}_i$ are independent for $i \neq t$, then $\bm{\theta}_i$ are iid. We then have:
\begin{align*} \delta_{q'}(\epsilon_{t,k}) = &||\sum_{i \in M_{*}}X_{ti}\beta_i+\epsilon_t-X_{tk}\rho_k-(\sum_{i \in M_{*}}X_{ti}^*\beta_i+\epsilon_t^*-X_{tk}^*\rho_k)||_{q'} \\
 & \leq \sum_{i \in M_{*}} |\beta_i|\delta_{q'}(X_{ti})+\delta_{q'}(\epsilon_t)+|\rho_k|\delta_{q'}(X_{tk}) \end{align*}
Therefore, $\chi_{m,q'}=O(m^{-\alpha})$, if we assume $\sum_{i \in M_{*}}|\beta_i| =O(1)$, $\Delta_{m,q}(\bm{\epsilon})=O(m^{-\alpha_{\epsilon}})$, and $\Phi_{m,r}(\bm{x}) = O(m^{-\alpha_x})$.

For GLSS; define $\hat{\mathcal{M}}_{\gamma_n}=\left\{1\leq i\leq p_n:|\hat{\beta}_k^{M}| \geq \gamma_n \right\}$, $\alpha = min(\alpha_x,\alpha_{\epsilon})$,\\$\tau=\frac{qr}{q+r}, \tau'=\frac{qq'}{q+q'}=\min(q/2,\tau)$. Let $\iota = 1$ if $\alpha > 1/2-1/\tau'$, otherwise $\iota = \tau'/2-\tau'\alpha$. Let $\zeta =1$, if $\alpha> 1/2 - 2/q'$, otherwise $\zeta= q'/4 - \alpha q'/2$ and let $\omega =1$, if $\alpha_x > 1/2 - 2/r$, otherwise $\omega= r/4 - \alpha_x r/2$. Additionally, let $K_{x,r}=\max_{j \leq p_n} \sup_{m \geq 0} (m+1)^{\alpha_x} \sum_{i=m}^{\infty}\delta_{r}(X_{ij})$, $\tilde{K}_{\epsilon,q'}=\max_{k \leq p_n} \sup_{m \geq 0} (m+1)^{\alpha} \sum_{i=m}^{\infty}\delta_{q'}(\epsilon_{i,k})$. Given Condition H, it follows that $K_{x,r},\tilde{K}_{\epsilon,q'}< \infty$. For the case of exponentially light tails, we define $\tilde{\varphi}'=\frac{2}{1+2\tilde{\alpha}_x+2\varphi}, \tilde{\varphi} = \frac{2}{1+4\varphi}$, and $\tilde{\alpha} = \frac{2}{1+4\tilde{\alpha}_x}$. Lastly, for ease of presentation let:
\begin{align}
&a_n= l_n\left[\frac{n^{\iota}l_n^{\tau'}K_{x,r}^{\tau'}\tilde{K}_{\epsilon,q'}^{\tau'}}{n^{\tau'-\tau'\kappa}}+\frac{n^{\zeta}l_n^{q'/2}\tilde{K}_{\epsilon,q'}^{q'}}{n^{q'/2-q'\kappa/2}}+\frac{n^{\omega}l_n^{r/2}K_{x,r}^{r}}{n^{r/2}}\right]\\
\nonumber\\
&b_n=l_n\left[\exp\left(-\frac{n^{1/2}}{l_n\upsilon_x^{2}}\right)^{\tilde{\alpha}}+\exp\left(-\frac{n^{1/2-\kappa}}{l_n\upsilon_x\phi}\right)^{\tilde{\varphi}'}+\exp\left(-\frac{n^{1/2-\kappa}}{l_n\phi^2}\right)^{\tilde{\varphi}}\right]
\end{align}
\indent We first present the following lemma, which provides deviation bounds on $||\hat{\Sigma}_{k,l_n}-\Sigma_k||_2$. This lemma, which is of independent interest, will allow us to obtain deviation bounds on our GLSS estimates.

\begin{lem} Assume the band length, $l_n=c\log(n)$ for sufficiently large $c>0$. 
\begin{enumerate}[(i)]
\item Assume Condition H holds. For $\kappa \in [0,1/2)$ we have the following:
\begin{align*}
P(||\hat{\Sigma}_{k,l_n}-\Sigma_{k}||_2 > cn^{-\kappa}) \leq O(a_n) \end{align*}
\item Assume Condition I holds. For $\kappa \in [0,1/2)$ we have the following:
\begin{align*}
P(||\hat{\Sigma}_{k,l_n}-\Sigma_{k}||_2 > cn^{-\kappa})  \leq O(b_n)
\end{align*}
\end{enumerate}

\label{lem2}
\end{lem}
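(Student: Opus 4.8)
\emph{Proof plan.} The plan is to separate the estimation error into a stochastic part and a deterministic bias part by introducing the banded population matrix $\Sigma_{k,l_n}=(\gamma_{i-j,k}\mathbbm{1}_{|i-j|\le l_n})_{1\le i,j\le n}$ and writing
\[
||\hat{\Sigma}_{k,l_n}-\Sigma_k||_2 \le ||\hat{\Sigma}_{k,l_n}-\Sigma_{k,l_n}||_2 + ||\Sigma_{k,l_n}-\Sigma_k||_2 .
\]
Both matrices appearing on the right are symmetric and banded, so I would bound the spectral norm by the maximum absolute row sum, using $||A||_2\le ||A||_{\infty}$. For the bias term this gives $||\Sigma_{k,l_n}-\Sigma_k||_2\le 2\sum_{|r|>l_n}|\gamma_{r,k}|$; since Condition E makes $\epsilon_{t,k}$ a finite-order stationary AR process, its autocovariances decay geometrically, so with $l_n=c\log n$ and $c$ large enough this bias is $o(n^{-\kappa})$ and can be absorbed into the threshold. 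For the stochastic term the same inequality yields $||\hat{\Sigma}_{k,l_n}-\Sigma_{k,l_n}||_2\le (2l_n+1)\max_{|r|\le l_n}|\hat{\gamma}_{r,k}-\gamma_{r,k}|$, so it remains to control $\max_{|r|\le l_n}|\hat{\gamma}_{r,k}-\gamma_{r,k}|$ at the level $cn^{-\kappa}/l_n$. The two appearances of $l_n$ here — the band width and the eventual union bound over the lags $r$ — are precisely what will produce the powers of $l_n$ in $a_n$ and $b_n$.

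Next I would linearize the plug-in residuals. Writing $\hat{\epsilon}_{t,k}=\epsilon_{t,k}-X_{tk}(\hat{\rho}_k-\rho_k)$ and expanding $\hat{\epsilon}_{t,k}\hat{\epsilon}_{t+|r|,k}$ produces three kinds of normalized sums: the sample autocovariance of the true marginal errors, $\frac1n\sum_t(\epsilon_{t,k}\epsilon_{t+|r|,k}-\gamma_{r,k})$; cross terms linear in $\hat{\rho}_k-\rho_k$ built from $\frac1n\sum_t\epsilon_{t,k}X_{t+|r|,k}$; and a quadratic term $(\hat{\rho}_k-\rho_k)^2\,\frac1n\sum_t X_{tk}X_{t+|r|,k}$ (together with negligible edge and $n$-versus-$(n-r)$ normalization corrections). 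The factor $\hat{\rho}_k-\rho_k$ is controlled by the marginal OLS deviation bound established in the analysis leading to Theorem \ref{theorem2}, so each piece reduces to a tail bound for a normalized sum of one of the product processes $\epsilon_{t,k}\epsilon_{t+|r|,k}$, $\epsilon_{t,k}X_{t+|r|,k}$, or $X_{tk}X_{t+|r|,k}$.

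The technical heart is to bound the functional dependence measures of these product processes. Using the Hölder product rule $\delta_{s}(U_tV_{t+r})\le ||U_0||_{2s}\,\delta_{2s}(V_{r})+||V_{r}||_{2s}\,\delta_{2s}(U_0)$, I would show that the cumulative dependence measure of each product inherits the polynomial decay of exponent $\alpha$ from those of $\epsilon_{t,k}$ and $X_{tk}$ under Condition H, with the effective moment dropping to $q'/2$ for the error--error product, to $\tau'=\min(q/2,\tau)$ for the error--covariate product, and to $r/2$ for the covariate--covariate product. I would then apply the Nagaev-type inequality for functional dependence measures to each sum at threshold of order $n^{-\kappa}/l_n$ and take a union bound over the $O(l_n)$ lags; collecting the resulting polynomial-plus-exponential pieces assembles exactly $a_n$, giving part (i). Part (ii) is structurally identical, except that Condition I supplies exponentially light tails, so I would instead invoke the exponential concentration inequality for functional dependence measures (as in Wu and Wu \cite{WuandWu2016}), producing the stretched-exponential terms with exponents $\tilde{\alpha},\tilde{\varphi}',\tilde{\varphi}$ that make up $b_n$.

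The main obstacle I anticipate is the bookkeeping around the estimated residuals together with the band structure: one must simultaneously track the deterioration of the threshold from $n^{-\kappa}$ to $n^{-\kappa}/l_n$, the union bound over lags, and the correct effective moment for each product process, while verifying that the quadratic-in-$(\hat{\rho}_k-\rho_k)$ term is genuinely higher order. The dependence-measure product rule and the moment accounting $q'/2$, $\tau'$, $r/2$ are the step where an error would propagate into the wrong exponents in $a_n$ and $b_n$; by contrast the bias control and the reduction from spectral norm to maximum row sum are comparatively routine.
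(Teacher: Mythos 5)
Your proposal is correct and follows essentially the same route as the paper's proof: the paper also reduces $\|\hat{\Sigma}_{k,l_n}-\Sigma_k\|_2$ to the banded bias term plus $\sum_{i\le l_n}|\hat{\gamma}_{i,k}-\gamma_{i,k}|$ (citing Wu's 2009 inequality rather than re-deriving it via row sums), kills the bias with Condition E and $l_n=c\log n$, expands $\hat{\epsilon}_{t,k}=\epsilon_{t,k}-X_{tk}(\hat{\rho}_k-\rho_k)$ into the same error--error, error--covariate, and covariate--covariate product sums with truncation of the OLS ratio, and applies the Nagaev-type (part (i)) and exponential (part (ii)) inequalities of Wu and Wu \cite{WuandWu2016} after a union bound over the $l_n$ lags at threshold $n^{-\kappa}/l_n$. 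Your moment accounting ($\tau'$, $q'/2$, $r/2$) and the two roles of $l_n$ match the structure of $a_n$ and $b_n$ exactly.
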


\noindent The following theorem gives the sure screening properties of GLSS:
\begin{thm} Assume the band length, $l_n=c\log(n)$ for sufficiently large $c>0$.
\begin{enumerate}[(i)]
\item Assume Conditions E,F,G,H hold, for any $c_2 > 0$ we have:
\begin{align*} P\left(\max_{j \leq p_n}|\hat{\beta}_k^{M} -\beta_k^{M}|>c_2 n^{-\kappa}\right)
&\leq O(p_na_n)
\end{align*}
\item Assume Conditions E,F,G,H hold, then for $\gamma_n = c_5 n^{-\kappa}$ with $c_5 \leq c_6/2$:
\begin{align*} P\left(\mathcal{M}_*\subset \hat{\mathcal{M}}_{\gamma_n} \right)\geq 1 &- O(s_na_n)
\end{align*}
\item Assume Conditions E,F,G,I hold, for any $c_2>0$ we have:
\begin{align*}P\left(\max_{j \leq p_n}|\hat{\beta}_k^{M} -\beta_k^{M}|>c_2 n^{-\kappa}\right) \leq  O(p_nb_n)
\end{align*}
\item Assume Conditions E,F,G,I hold, then for $\gamma_n = c_5 n^{-\kappa}$ with $c_5 \leq c_6/2$:
\begin{align*} P\left(\mathcal{M}_*\subset \hat{\mathcal{M}}_{\gamma_n} \right)\geq 1 &- O(s_nb_n)
\end{align*}
\end{enumerate}
\label{theorem6} 
\end{thm}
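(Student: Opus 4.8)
The plan is to follow the template of Theorem \ref{theorem2}: first prove the uniform deviation bounds (i) and (iii), then read off the sure screening statements (ii) and (iv) from the signal strength Condition F. The starting point is to pass through the infeasible GLS estimator $\tilde{\beta}_k^{M}$ and split
\begin{equation*}
\hat{\beta}_k^{M} - \beta_k^{M} = \bigl(\hat{\beta}_k^{M} - \tilde{\beta}_k^{M}\bigr) + \bigl(\tilde{\beta}_k^{M} - \beta_k^{M}\bigr),
\end{equation*}
so that the first bracket isolates the price of replacing $\Sigma_k$ by the banded estimate $\hat{\Sigma}_{k,l_n}$, while the second is the stochastic fluctuation of the idealized estimator about its population target. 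The three summands of $a_n$ (and of $b_n$) arise from three sources: the covariate quadratic form governing the GLS denominator, the quadratic-in-error forms produced by the residual-based autocovariance estimates, and the covariate--error cross (bilinear) form; the log factors $l_n$ enter through the bandwidth of $\hat{\Sigma}_{k,l_n}$ and the reduction of the GLS forms to $O(l_n)$-banded sums.

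For the second bracket I would substitute $\bm{y} = \bm{X}\bm{\beta} + \bm{\epsilon}$ and write
\begin{equation*}
\tilde{\beta}_k^{M} - \beta_k^{M} = \Bigl[\bigl(\bm{X}_k^{T}\Sigma_k^{-1}\bm{X}_k\bigr)^{-1}\bm{X}_k^{T}\Sigma_k^{-1}\bm{X}\bm{\beta} - \beta_k^{M}\Bigr] + \bigl(\bm{X}_k^{T}\Sigma_k^{-1}\bm{X}_k\bigr)^{-1}\bm{X}_k^{T}\Sigma_k^{-1}\bm{\epsilon}.
\end{equation*}
Condition G makes the numerator of the last term mean zero. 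Since the marginal error process is AR($L_k$) with $L_k < K$ by Condition E, the precision matrix $\Sigma_k^{-1}$ is banded with bandwidth at most $K$ up to negligible boundary corrections, so both the numerator $\bm{X}_k^{T}\Sigma_k^{-1}\bm{\epsilon}$ and the denominator $\bm{X}_k^{T}\Sigma_k^{-1}\bm{X}_k$ collapse to banded sums of products of the covariate and error sequences. I would then apply the Nagaev-type inequalities for dependent sequences to the centered numerator, whose functional dependence is captured through $K_{x,r}$ and $\tilde{K}_{\epsilon,q'}$ with the combined moment index $\tau' = qq'/(q+q')$, to get $|\bm{X}_k^{T}\Sigma_k^{-1}\bm{\epsilon}| = O(n^{1-\kappa})$ on an event of the stated probability, and to the denominator to show it concentrates around a positive multiple of $n$. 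The covariate-only bracket is a smooth function of empirical averages that concentrates about $\beta_k^{M}$, contributing the purely covariate term $K_{x,r}^{r}$.

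The main obstacle is the first bracket, which I would control by a matrix perturbation argument driven by Lemma \ref{lem2}. On the event $\{||\hat{\Sigma}_{k,l_n} - \Sigma_k||_2 \le c n^{-\kappa}\}$, whose complement has probability $O(a_n)$ (resp.\ $O(b_n)$) by Lemma \ref{lem2}, the asymptotic positive definiteness of $\hat{\Sigma}_{k,l_n}$ keeps its eigenvalues bounded away from zero, and together with the boundedness of the eigenvalues of $\Sigma_k$ the identity
\begin{equation*}
\hat{\Sigma}_{k,l_n}^{-1} - \Sigma_k^{-1} = -\,\hat{\Sigma}_{k,l_n}^{-1}\bigl(\hat{\Sigma}_{k,l_n} - \Sigma_k\bigr)\Sigma_k^{-1}
\end{equation*}
transfers the spectral bound to $||\hat{\Sigma}_{k,l_n}^{-1} - \Sigma_k^{-1}||_2 = O(n^{-\kappa})$. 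Expanding $\hat{\beta}_k^{M} - \tilde{\beta}_k^{M}$ then produces terms each carrying a factor $\hat{\Sigma}_{k,l_n}^{-1} - \Sigma_k^{-1}$ sandwiched between the random vectors $\bm{X}_k$ and $\bm{y}$. The delicate point is that these vectors are heavy-tailed and serially dependent, so I cannot afford a crude operator-norm bound; instead I must control $n^{-1}||\bm{X}_k||_2^{2}$ and $n^{-1}\bm{X}_k^{T}\bm{y}$ uniformly in $k$, which I do by re-using the quadratic- and bilinear-form concentration bounds from the previous paragraph. Dividing by the denominators, which are bounded below by a constant multiple of $n$ on the same event, gives $|\hat{\beta}_k^{M} - \tilde{\beta}_k^{M}| = O(n^{-\kappa})$ with the residual probability again of order $a_n$ (resp.\ $b_n$).

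Combining the two brackets yields $|\hat{\beta}_k^{M} - \beta_k^{M}| = O(n^{-\kappa})$ off an event of probability $O(a_n)$ for each fixed $k$, and a union bound over $k \le p_n$ gives parts (i) and (iii) with the prefactor $p_n$. For the sure screening claims (ii) and (iv), choose $c_2 < c_6/2$; on the event where $\max_{k}|\hat{\beta}_k^{M} - \beta_k^{M}| \le c_2 n^{-\kappa}$, Condition F forces, for every $k \in \mathcal{M}_*$,
\begin{equation*}
|\hat{\beta}_k^{M}| \ge |\beta_k^{M}| - c_2 n^{-\kappa} \ge \tfrac{c_6}{2}\, n^{-\kappa} \ge \gamma_n,
\end{equation*}
so every active predictor is retained; since only the $s_n$ indices of $\mathcal{M}_*$ need satisfy the deviation bound, the union bound now contributes $s_n$ in place of $p_n$. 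Finally, parts (iii) and (iv) are obtained by running the identical argument under the stronger moment Condition I, replacing each Nagaev-type polynomial tail bound by its exponential analogue and invoking part (ii) of Lemma \ref{lem2} in place of part (i), which turns $a_n$ into $b_n$ throughout.
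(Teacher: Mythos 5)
Your proposal is correct and follows essentially the same route as the paper's proof: both control the price of replacing $\Sigma_k$ by $\hat{\Sigma}_{k,l_n}$ through Lemma \ref{lem2} (your resolvent identity and the paper's eigenvalue sandwich of $\Sigma_k^{1/2}\hat{\Sigma}_{k,l_n}^{-1}\Sigma_k^{1/2}-I_n$ are equivalent ways of passing from $||\hat{\Sigma}_{k,l_n}-\Sigma_k||_2$ to $||\hat{\Sigma}_{k,l_n}^{-1}-\Sigma_k^{-1}||_2$), both exploit the AR($L_k$) structure of Condition E to reduce $\bm{X}_k^{T}\Sigma_k^{-1}\bm{\epsilon}^k$ and $\bm{X}_k^{T}\Sigma_k^{-1}\bm{X}_k$ to filtered sums amenable to the Nagaev-type inequalities, and both finish with Cauchy--Schwarz control of $||\bm{X}_k||_2$, $||\bm{\epsilon}^k||_2$, a union bound, and Condition F. The only cosmetic difference is bookkeeping: you split through the infeasible estimator $\tilde{\beta}_k^{M}$, while the paper works directly with the ratio decomposition $|T_2/T_1-T_4/T_3|$ of feasible numerator/denominator against their population counterparts.
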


In Lemma \ref{lem2}, the rate of decay also depends on the band length ($l_n$). The band length primarily depends on the decay rate of the autocovariances of the process $\epsilon_{t,k}$. Since we are assuming an exponential decay rate, we can set $l_n=O(\log(n))$. If $\gamma_{i,k}=O(i^{-\beta})$ for $\beta>1$, then we require $l_n^{-\beta+1}=o(n^{-\kappa})$. We omit the exponential terms in the bounds for part part (i) of Lemma \ref{lem2}, and parts (i), and (ii) of Theorem \ref{theorem6} to conserve space and provide a cleaner result. For GLSS, the range for $p_n$ also depends on the band length ($l_n$), in addition to the moment conditions and the strength of dependence in the covariate and error processes. For example, if we assume $r=q$, and $\alpha \geq 1/2 - 2/r$ then $p_n=o(n^{r/2-r\kappa/2-1}/l_n^{r/2+1})$. Compared to SIS, we have a lower range of $p_n$ by a factor of $l_n^{r/2+1}$. We conjecture that this is due to our proof strategy, which relies on using a deviation bound on $||\hat{\Sigma}_{k,l_n}-\Sigma_{k}||_2$, and uses the functional dependence measure, rather than autocorrelation, to quantify dependence. In practice, we believe using GLSS, which corrects for serial correlation, and uses an estimator with lower asymptotic variance will achieve better performance. We illustrate this in more detail in our simulations section, and in the appendix (section \ref{appendixB}).

Similar to SIS, we can control the size of the model selected by GLSS. For the case when $\beta_k^{M}=\rho_k\textrm{ } \forall k$, the bound on the selected model size is the same as in SIS. However, we need to place an additional assumption when $\beta_k^{M}\neq \rho_k$: If the cross covariance, $\gamma_{\bm{X}_k,Y}(h) \propto E(X_{tk}Y_t)$, for $h\in \{-L_k,\ldots,-1,1,\ldots,L_k\}$, we can bound the selected model size by the model size selected by SIS. More formally we have:
\begin{cor} Assume the cross covariance, $\gamma_{\bm{X}_k,Y}(h) \propto E(X_{k,t}Y_t)$, for $h\in \{-L_k,\ldots,-1,1,\ldots,L_k\}$
\begin{enumerate}[(i)]
\item Assume Conditions E,F,G,H hold, then for $\gamma_n = c_5 n^{-\kappa}$ with $c_5 \leq c_6/2$:
\begin{align*} P\left(|\hat{\mathcal{M}}_{\gamma_n}| \leq O(n^{2\kappa}\lambda_{max}(\Sigma))\right) \geq 1 - O(p_na_n) 
\end{align*}
\item Assume Conditions E,F,G,I hold, then for $\gamma_n = c_5 n^{-\kappa}$ with $c_5 \leq c_6/2$:
\begin{align*} P\left(|\hat{\mathcal{M}}_{\gamma_n}| \leq O(n^{2\kappa}\lambda_{max}(\Sigma))\right) \geq 1 - O(p_nb_n) 
\end{align*}
\end{enumerate}
\label{cor1}
\end{cor}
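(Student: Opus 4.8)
The plan is to mimic the counting argument used for the model-size bound in Theorem \ref{theorem2}(iii), replacing the marginal OLS coefficients $\rho_k$ by the marginal GLS coefficients $\beta_k^M$. The governing inequality is the elementary bound
\begin{equation*}
|\hat{\mathcal{M}}_{\gamma_n}| \cdot \min_{k \in \hat{\mathcal{M}}_{\gamma_n}} (\beta_k^M)^2 \leq \sum_{k \in \hat{\mathcal{M}}_{\gamma_n}} (\beta_k^M)^2 \leq \sum_{k=1}^{p_n} (\beta_k^M)^2,
\end{equation*}
so it suffices to (a) bound $\sum_k (\beta_k^M)^2$ from above by a multiple of $\lambda_{max}(\Sigma)$, and (b) show that, on a high-probability event, every selected index has $(\beta_k^M)^2$ bounded below by a constant multiple of $n^{-2\kappa}$.

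First I would record the population identity behind the SIS bound. Since $Y_t = \sum_i X_{ti}\beta_i + \epsilon_t$ with $E(X_{tk}\epsilon_t)=0$ and zero means, $E(X_{tk}Y_t) = (\Sigma\bm{\beta})_k$, and using $\Var(X_{tk})=O(1)$ bounded away from $0$ gives
\begin{equation*}
\sum_{k=1}^{p_n}\rho_k^2 \leq C\sum_{k=1}^{p_n}(E(X_{tk}Y_t))^2 = C\,\bm{\beta}^T\Sigma^2\bm{\beta} \leq C\,\lambda_{max}(\Sigma)\,\bm{\beta}^T\Sigma\bm{\beta} \leq C\,\lambda_{max}(\Sigma)\,\Var(Y_t),
\end{equation*}
the last step using $\bm{\beta}^T\Sigma\bm{\beta} = \Var(\sum_i X_{ti}\beta_i) \leq \Var(Y_t) = O(1)$. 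This is exactly the content of Theorem \ref{theorem2}(iii), and it already settles the case $\beta_k^M = \rho_k$ for all $k$.

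The crux is transferring this bound to $\beta_k^M$ when $\beta_k^M \neq \rho_k$, and here the proportionality hypothesis $\gamma_{\bm{X}_k,Y}(h) \propto E(X_{tk}Y_t)$ for $h \in \{-L_k,\ldots,-1,1,\ldots,L_k\}$ is used. Expanding the numerator of
\begin{equation*}
\beta_k^M = \frac{E\big(Y_t - \sum_{i=1}^{L_k}\alpha_i Y_{t-i}\big)\big(X_{t,k} - \sum_{i=1}^{L_k}\alpha_i X_{t-i,k}\big)}{E\big(X_{t,k} - \sum_{i=1}^{L_k}\alpha_i X_{t-i,k}\big)^2}
\end{equation*}
produces a finite linear combination of the cross-covariances $\gamma_{\bm{X}_k,Y}(h)$, $|h| \leq L_k$, together with the contemporaneous term $E(X_{tk}Y_t)$; the proportionality assumption makes every one of these a multiple of $E(X_{tk}Y_t) = \Sigma_{kk}\rho_k$. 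Thus the numerator equals $c_k\,\Sigma_{kk}\rho_k$, with $c_k$ depending only on the (at most $K$ many) AR coefficients and the proportionality constants, while the denominator is the innovation variance of the filtered covariate, which Condition E and $\Var(X_{tj})=O(1)$ keep bounded above and away from $0$ uniformly in $k$. Consequently $|\beta_k^M| \leq C|\rho_k|$ uniformly in $k$, whence $\sum_k (\beta_k^M)^2 \leq C^2\sum_k\rho_k^2 \leq C'\lambda_{max}(\Sigma)$ by the previous step.

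Finally I would invoke the deviation bound already proved in Theorem \ref{theorem6}: under Conditions E,F,G,H one has $P(\max_{j\leq p_n}|\hat{\beta}_k^M - \beta_k^M| > c_2 n^{-\kappa}) \leq O(p_n a_n)$, and the analogue with $b_n$ under Conditions E,F,G,I. On the complementary event, any index with $|\hat{\beta}_k^M| \geq \gamma_n = c_5 n^{-\kappa}$ satisfies $|\beta_k^M| \geq (c_5 - c_2)n^{-\kappa}$; taking $c_2$ small enough yields $(\beta_k^M)^2 \geq c\,n^{-2\kappa}$ for every selected $k$. Combining with the governing inequality gives $|\hat{\mathcal{M}}_{\gamma_n}|\,c\,n^{-2\kappa} \leq \sum_k (\beta_k^M)^2 \leq C'\lambda_{max}(\Sigma)$, i.e. $|\hat{\mathcal{M}}_{\gamma_n}| \leq O(n^{2\kappa}\lambda_{max}(\Sigma))$ on an event of probability at least $1 - O(p_n a_n)$ (resp. $1 - O(p_n b_n)$), which is the claim. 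I expect the single genuine obstacle to be the uniformity in $k$ in the proportionality step, namely verifying that $c_k$ and the filtered-covariate variance are bounded above and below by constants independent of $k$, since everything else is a direct reuse of Theorem \ref{theorem2}(iii) and Theorem \ref{theorem6}.
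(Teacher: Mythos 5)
Your proposal is correct and follows essentially the same route as the paper: use the proportionality hypothesis to get $|\beta_k^M| \leq C|\rho_k|$ uniformly, hence $\sum_{k=1}^{p_n}(\beta_k^{M})^2 = O\bigl(\sum_{k=1}^{p_n}\rho_k^2\bigr) = O(\lambda_{\max}(\Sigma))$, and then run the counting argument of Theorem \ref{theorem2}(iii) on the high-probability event supplied by the deviation bounds of Theorem \ref{theorem6}. The only difference is one of detail: you expand the numerator of $\beta_k^M$ explicitly and flag the uniformity in $k$ of the proportionality constants and of the filtered-covariate variance, a point the paper's proof leaves implicit.
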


\section{Second Stage Selection with Adaptive Lasso}\label{sec:section 5}

The adaptive Lasso, as introduced by \cite{Zou2006}, is the solution to the following:
\begin{equation} argmin_\beta \textrm{ } ||\bm{y} - \bm{X}\bm{\beta}||^2+\lambda_{n}\sum_{j=1}^{p_n} {w}_{j} |\beta_j|,\textrm{ where } w_{j}=|\hat{\beta}_{I,j}|^{-1}, \end{equation}
and $\hat{\beta}_{I,j}$ is our initial estimate. For sign consistency; when $p_n>>n$, the initial estimates can be the marginal regression coefficients provided the design matrix satisfies the partial orthogonality condition as stated in \cite{Huangetal2008}, or we can use the Lasso as our initial estimator provided the restricted eigenvalue condition holds (see \cite{MM2016}). Both of these conditions can be stringent when $p_n>>n$. This makes the adaptive Lasso a very attractive option as a second stage variable selection method, after using screening to significantly reduce the dimension of the feature space. We have the following estimator:
\begin{equation} \tilde{\bm{\beta}}_{\hat{\mathcal{M}}_{\gamma_n}}=argmin_{\bm{\beta}_{\hat{\mathcal{M}}_{\gamma_n}}} \textrm{ } ||\bm{y} - \bm{X}_{\hat{\mathcal{M}}_{\gamma_n}}\bm{\beta}_{\hat{\mathcal{M}}_{\gamma_n}}||^2+\lambda_{n}\sum_{j=1}^{d_n} {w}_{j} |\beta_j|, w_{j}=|\hat{\beta}_{I,j}|^{-1} \end{equation}
\noindent Where $\bm{X}_{\hat{\mathcal{M}}_{\gamma_n}}$ denotes the $n \times d_n$ submatrix of $\bm{X}$ that is obtained by extracting its columns corresponding to the indices in $\hat{\mathcal{M}}_{\gamma_n}$. We additionally define $\bm{X}_{\mathcal{M}_{\gamma_n}}$ accordingly. Our initial estimator $\hat{\bm{\beta}_{I}}=(\hat{\beta}_{I,1},\ldots,\hat{\beta}_{I,d_n})$ is obtained using the Lasso. Let $\hat{\Sigma}_{\mathcal{M}_{\gamma_n}}=\bm{X}_{\mathcal{M}_{\gamma_n}}^{T}\bm{X}_{\mathcal{M}_{\gamma_n}}/n$, and let $\Sigma_{\mathcal{M}_{\gamma_n}}$ be its population counterpart. Our two stage estimator, $ \hat{\bm{\beta}}_{\hat{\mathcal{M}}_{\gamma_n}}$, is then formed by inserting zeroes corresponding to the covariates which were excluded in the screening step, and inserting the adaptive Lasso estimates, $\tilde{\bm{\beta}}_{\hat{\mathcal{M}}_{\gamma_n}}$, for covariates which were selected by the screening step. We need the following conditions for the combined two stage estimator to achieve sign consistency:
\newline\newline
\noindent \textbf{Condition J}: The matrix $\Sigma_{\mathcal{M}_{\frac{\gamma_n}{2}}}$ satisfies the restricted eigenvalue condition,\\
RE($s_n$,3)(see \cite{bickel09} for details):
\begin{equation} \phi_0=\min_{S \subseteq \{1,\ldots,d_n'\}, |S| \leq s_n} \min_{\bm{v}\neq 0, |\bm{v}_{S^c}| \leq 3|\bm{v}_{S}|} \frac{\bm{v}^{T}\Sigma_{\mathcal{M}_{\frac{\gamma_n}{2}}}\bm{v}}{\bm{v}^{T}\bm{v}} \geq c > 0,\end{equation}
where $\bm{v}=(v_1,\ldots,v_{d_n'})$ and $\bm{v}_{S}=(v_i,i \in S), \bm{v}_{S^c}=(v_i,i \in S^{c}).$
\newline\newline
\noindent \textbf{Condition K}: Let $\lambda_n$ and $\lambda_{I,n}$ be the regularization parameters of the adaptive lasso and the initial lasso estimator respectively. For some $\psi \in (0,1)$, we assume:
\begin{equation} cn^{1-\frac{\psi}{2}}(\frac{\phi_0}{s_n})^{3/2} \geq \lambda_{I,n} \geq \lambda_{n}n^{\psi/2} \end{equation}

\noindent \textbf{Condition L}: Let $\beta_{min}=\min_{i \leq s_n}|\beta_i|$, and $w_{\max}=\max_{i \leq s_n} w_i > 0$. Assume $\beta_{min}>\frac{2}{w_{\max}}$ and $\beta_{min} >2c\frac{\lambda_{I,n} s_n}{\phi_0 n}$.
\newline\newline
Condition J allows us to use the Lasso as our initial estimator. Notice that we placed the RE($s_n$,3) assumption on the matrix $\Sigma_{\mathcal{M}_{\frac{\gamma_n}{2}}}$, rather than the matrix $\hat{\Sigma}_{\hat{\mathcal{M}}_{\gamma_n}}$, given the indices in $\hat{M}_{\gamma_n}$ are random as a result of our screening procedure. Recall that for SIS, $\mathcal{M}_{\frac{\gamma_n}{2}}=\left\{1\leq i\leq p:|\rho_i| \geq \gamma_n/2 \right\}$, and $|\mathcal{M}_{\frac{\gamma_n}{2}}|=d_n'=O(d_n)$, and for GLSS we have a similar definition. Therefore, we are placing the RE($s_n$,3) assumption on the population covariance matrix of a fixed set of $d_n'$ predictors. Conditions K and L are standard assumptions, and are similar to the ones used in \cite{MM2016}. Condition K primarily places restrictions on the rate of increase of $\lambda_n$, and $\lambda_{I,n}$. Condition L places a lower bound on the magnitude of the non-zero parameters which decays with the sample size. 

The next theorem deals with the two stage SIS-Adaptive Lasso estimator. A very similar result applies to the two stage GLSS-Adaptive Lasso estimator, if we replace Conditions A,B,C (resp. D) with Conditions E,F,G,H (resp. I), to avoid repetition we omit the result. For the following theorem, the terms $\iota, \omega, K_{x,r}\textrm{, and } K_{\epsilon,q}$ have been defined in the paragraph preceding Theorem \ref{theorem2}, and $\tilde{\alpha}',\tilde{\alpha}$ have been defined in Theorem \ref{theorem4}.

\begin{thm}
\begin{enumerate}[(i)]
\item Assume Conditions A,B,C,J,K,L hold, then for $\gamma_n = c_3 n^{-\kappa}$ with $c_3 \leq c_1/2$ we have: 
\begin{align*} P(sgn( \hat{\bm{\beta}}_{\hat{\mathcal{M}}_{\gamma_n}})=sgn(\bm{\beta})) &\geq 1- O\left(s_np_n\left[\frac{n^{\omega} K_{x,r}^{r}}{(n/s_n)^{r/2-r\kappa/2}}- \exp(-\frac{n^{1-2\kappa}}{s_n^2K_{x,r}^4}) \right]\right)\nonumber \\
& - O\left(p_n\left[\frac{n^{\iota} K_{x,r}^{\tau}K_{\epsilon,q}^{\tau}}{n^{\tau-\tau\kappa}}- \exp(-\frac{n^{1-2\kappa}}{K_{x,r}^2K_{\epsilon,q}^2})\right]\right) \nonumber \\
& -O\left(d_n^{'2}\left[\frac{n^{\omega} K_{x,r}^{r}}{(n/s_n)^{r/2}}- \exp(-\frac{n}{s_n^2K_{x,r}^4})\right]\right) \nonumber \\
& -O\left(d_n'\left[\frac{n^{\iota} K_{x,r}^{\tau}K_{\epsilon,q}^{\tau}}{\lambda_n^{\tau}n^{\tau\psi/2}}+ \exp(-\frac{\lambda_n^2 n^{\psi-1}}{K_{x,r}^2K_{\epsilon,q}^2})\right]\right)\nonumber\end{align*}
\item Assume Conditions A,B,C,J,K,L hold, then for $\gamma_n = c_3 n^{-\kappa}$ with $c_3 \leq c_1/2$ we have: 
\begin{align*} P(sgn( \hat{\bm{\beta}}_{\hat{\mathcal{M}}_{\gamma_n}})=sgn(\bm{\beta})) &\geq 1-O(s_np_n\exp\left(-\frac{n^{1/2-\kappa}}{\upsilon_x^{2}s_n}\right)^{\tilde{\alpha}}) \nonumber \\
- &O(p_n\exp\left(-\frac{n^{1/2-\kappa}}{\upsilon_x\upsilon_{\epsilon}}\right)^{\tilde{\alpha}'}) - O(d_n^{'2}\exp\left(-\frac{n^{1/2}}{\upsilon_x^{2}s_n}\right)^{\tilde{\alpha}})\nonumber \\-& O(d_n'\exp\left(-\frac{\lambda_n n^{\psi/2-1/2} }{\upsilon_x\upsilon_{\epsilon}}\right)^{\tilde{\alpha}'}) \end{align*}
\end{enumerate}
\label{theorem7}
\end{thm}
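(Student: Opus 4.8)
The plan is to prove sign consistency by splitting the event $\{sgn(\hat{\bm{\beta}}_{\hat{\mathcal{M}}_{\gamma_n}})=sgn(\bm{\beta})\}$ into a screening stage and a second (adaptive Lasso) stage, and to bound each failure probability separately via a union bound, matching the four lines of the stated inequality. The first step is to establish the chain of inclusions $\mathcal{M}_* \subset \hat{\mathcal{M}}_{\gamma_n} \subset \mathcal{M}_{\gamma_n/2}$. The left inclusion is the sure screening property, while the right inclusion guarantees that the random set $\hat{\mathcal{M}}_{\gamma_n}$ is contained in the fixed index set on which the restricted eigenvalue assumption (Condition J) is imposed. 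Both inclusions hold on the event $\{\max_{j \leq p_n}|\hat{\rho}_j - \rho_j| \leq c_2 n^{-\kappa}\}$: if $|\hat{\rho}_j| \geq \gamma_n = c_3 n^{-\kappa}$ then $|\rho_j| \geq \gamma_n - c_2 n^{-\kappa} \geq \gamma_n/2$ for $c_2 \leq c_3/2$, while Condition A together with $c_3 \leq c_1/2$ forces every $k \in \mathcal{M}_*$ into $\hat{\mathcal{M}}_{\gamma_n}$. Theorem~\ref{theorem2}(i) bounds the complementary probability by $O(p_n \vartheta_n)$, which supplies the first two lines of the bound; part (ii) invokes Theorem~\ref{theorem4}(i) for the exponential-tail analogue.

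Conditioned on this chain, the task reduces to showing the adaptive Lasso restricted to $\hat{\mathcal{M}}_{\gamma_n}$ recovers the correct signs, for which I would use a primal-dual (KKT) argument. The second step transfers the population restricted eigenvalue condition on $\Sigma_{\mathcal{M}_{\gamma_n/2}}$ to the empirical Gram matrix $\hat{\Sigma}_{\hat{\mathcal{M}}_{\gamma_n}}$: since $\hat{\mathcal{M}}_{\gamma_n} \subset \mathcal{M}_{\gamma_n/2}$ is a fixed superset, it suffices to control $\max_{j,k \in \mathcal{M}_{\gamma_n/2}}|\hat{\Sigma}_{jk} - \Sigma_{jk}|$ at a constant threshold. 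Applying a Nagaev-type inequality to the product process $X_{ti}X_{tj}$ over the $O(d_n'^2)$ entries yields the third line of the bound, whose exponents lack the factor $\kappa$ precisely because the deviation is of constant order; a perturbation argument then shows $\hat{\Sigma}_{\hat{\mathcal{M}}_{\gamma_n}}$ inherits the RE constant up to a factor. The third step controls the initial Lasso estimator $\hat{\bm{\beta}}_I$: using the RE condition and Condition K, standard oracle inequalities give $|\hat{\bm{\beta}}_I - \bm{\beta}|_1 = O(\lambda_{I,n} s_n/(\phi_0 n))$, which under Condition L ensures the weights $w_j = |\hat{\beta}_{I,j}|^{-1}$ stay bounded on $\mathcal{M}_*$ and diverge on $\hat{\mathcal{M}}_{\gamma_n}\setminus\mathcal{M}_*$.

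The fourth and core step is the KKT verification. I would introduce the oracle solution minimizing the adaptive Lasso objective restricted to $\mathcal{M}_*$, show it carries the correct signs using the $\beta_{\min}$ lower bound of Condition L and the estimation error control, and then verify strict dual feasibility on the inactive coordinates $j \in \hat{\mathcal{M}}_{\gamma_n}\setminus\mathcal{M}_*$, namely $|\bm{X}_j^T(\bm{y}-\bm{X}_{\mathcal{M}_*}\tilde{\bm{\beta}}_{\mathcal{M}_*})|/n < \lambda_n w_j/n$. Since $w_j$ diverges for inactive $j$ and the $\bm{X}_j^T\bm{X}_{\mathcal{M}_*}$ term is already controlled via the covariance concentration above, this reduces to showing the noise-correlation term $\max_{j \in \mathcal{M}_{\gamma_n/2}}|\bm{X}_j^T\bm{\epsilon}/n|$ is small relative to $\lambda_n n^{\psi/2-1}$; a Nagaev-type bound on the product process $X_{tj}\epsilon_t$, whose dependence is expressed through $K_{x,r}$ and $K_{\epsilon,q}$, produces exactly the fourth line. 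Collecting the four failure probabilities by a union bound gives part (i), and part (ii) follows by replacing each polynomial Nagaev bound with its exponential counterpart from the Condition D analysis. The main obstacle is the careful functional-dependence-measure bookkeeping for the two product processes $X_{ti}X_{tj}$ and $X_{tj}\epsilon_t$, and ensuring that the RE-transfer threshold and the dual-feasibility threshold remain jointly compatible with the coupled constraints on $\lambda_n$, $\lambda_{I,n}$, and $\beta_{\min}$ imposed by Conditions K and L.
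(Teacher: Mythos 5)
Your proposal is correct and follows essentially the same route as the paper: the same decomposition into three events (the screening event giving $\mathcal{M}_*\subset\hat{\mathcal{M}}_{\gamma_n}\subset\mathcal{M}_{\gamma_n/2}$, the entrywise Gram-matrix concentration event at a threshold of order $\phi_0/s_n$ used to transfer the restricted eigenvalue condition, and the noise-correlation event $\max_k|\sum_t X_{tk}\epsilon_t|\leq\lambda_n n^{\psi/2}$), each bounded by the same Nagaev-type inequalities to produce the four lines, with part (ii) obtained by swapping in the exponential bounds. The only cosmetic difference is that the paper delegates the deterministic second-stage analysis (weights, KKT sign recovery) to Lemmas 1, 2, 7, 8 and Propositions 1, 2 of the cited adaptive-Lasso reference, whereas you propose to re-derive that primal-dual witness argument directly; this is the same content, not a different method.
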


To achieve sign consistency for the case of finite polynomial moments we require:

\noindent \textbf{Condition M}: Assume $\lambda_nn^{\psi/2-1/2} \rightarrow \infty$ and $p_n=o(\min(\frac{s_n(n/s_n)^{r/2-r\kappa/2}}{n^{\omega}},\frac{n^{\tau-\tau\kappa}}{n^{\iota}}))$, $d_n'=o(min((n/s_n)^{r/4-\omega/2},\lambda_n^{\tau}n^{\tau\psi/2-\iota}))$
\newline\newline
For the case of exponential moments, we require:
\newline\newline
\noindent \textbf{Condition N}: Assume $\lambda_nn^{\psi/2-1/2} \rightarrow \infty$, \\ $p_n=o(\min(\exp\left(\frac{Cn^{1/2-\kappa}}{s_n}\right)^{\tilde{\alpha}}/s_n, \exp(Cn^{1/2-\kappa})^{\tilde{\alpha}'}))$, \\ and $d_n'=o(\min(\exp\left(\frac{n^{1/2}}{s_n}\right)^{\tilde{\alpha}/2},\exp\left(\lambda_n n^{\psi/2-1/2} \right)^{\tilde{\alpha}'}))$
\newline\newline
\indent From Conditions M, N, and Theorem \ref{theorem6}, we see an additional benefit of using the two stage selection procedure as opposed to using the adaptive Lasso as a stand alone procedure. For example, if we assume $d_n \leq n^{2\kappa}\lambda_{\max}(\Sigma) =O(n)$, and that both the error and covariate processes are sub-Gaussian, we obtain $p_n=o(\exp(n^{\frac{1-2\kappa}{3}}))$ for the two stage estimator. By setting $d_n'=p_n$, we obtain the result when using the adaptive Lasso as a stand alone procedure, with the Lasso as its initial estimator. Under the scenario detailed above, the dimension of the feature space, which depends on $\lambda_n$ and $\psi$, for the stand alone adaptive Lasso can be at most $p_n=o(\exp(n^{\frac{1}{6}}))$. Therefore for $\kappa < 1/4$, we obtain a larger range for $p_n$ and a faster rate of decay using the two stage estimator. For $\kappa \geq 1/4$ it is not clear whether the two stage estimator has a larger range for $p_n$, compared to using the adaptive Lasso alone. 

The sign consistency of the stand alone adaptive Lasso estimator in the time series setting was established in \cite{MM2016}. Their result was obtained under strong mixing assumptions on the covariate and error processes, with the additional assumption that the error process is a martingale difference sequence. Additionally, in the ultrahigh dimensional setting they require a geometric decay rate on the strong mixing coefficients. In contrast, we obtain results for both the two stage and stand alone adaptive lasso estimator, and our results are obtained using the functional dependence measure framework. Besides assuming moment conditions, we are not placing any additional assumptions on the temporal decay of the covariate and error processes other than $\Delta_{0,q}(\bm{\epsilon})$,$\Phi_{0,q}(\bm{x}) < \infty$. Furthermore, we weaken the martingale difference assumption they place on the error process, thereby allowing for serial correlation in the error process. Finally, by using Nagaev type inequalities introduced in \cite{WuandWu2016}, our results are easier to interpret and also allow us obtain a higher range for $p_n$.

\section{Simulations}\label{sec:section 6}

In this section, we evaluate the performance of SIS, GLSS, and the two stage selection procedure using the adaptive Lasso. For GLSS instead of using the banded estimate for $\Sigma_k$ we use a tapered estimate: $\hat{\Sigma}_{k} * R_{l_n}$, where $\hat{\Sigma}_{k}= (\hat{\gamma}_{i-j,k})_{1\leq i,j \leq n}$ and
$R_{l_n}=(\max(1-\frac{|i-j|}{l_n},0))_{1\leq i,j \leq n}$ is the triangular kernel. We fix $l_n=15$, and we observed the results were fairly robust to the choice of $l_n$. In our simulated examples, we fix $n=200, s_n=6$ and $d_n=n-1$, while we vary $p_n$ from 1000 to 5000. We repeat each experiment 200 times. For screening procedures, we report the proportion of times the true model is contained in our selected model. For the two stage procedure using the adaptive Lasso, we report the proportion of times there was a $\lambda_n$ on the solution path which \emph{selected the true model}.
\newline\newline
\noindent\textbf{Case 1: Uncorrelated Features} 
\newline\newline
\indent Consider the model (\ref{eq1}), for the covariate process we have: 
\begin{equation} \bm{x}_t=A_1\bm{x}_{t-1} + \bm{\eta}_t \label{VAR} \end{equation} 
Where $A_1=diag(\gamma)$, and we vary $\gamma$ from .4 to .6. We set $\bm{\eta}_t \sim N(0,\Sigma_{\eta})$, or $\bm{\eta}_t \sim t_{5}(0,V)$ in which case the covariance matrix is $\Sigma_{\eta}= (5/3)*V$. For this scenario we will be dealing with uncorrelated predictors, we set $\Sigma_{\eta}=I_{p_n}$. For the error process, we have an AR(1) process: $\epsilon_i=\alpha\epsilon_{i-1}+e_i$. We let $\alpha$ vary from .6 to .9, and let $e_i \sim t_{5}$ or $e_i \sim N(0,1)$. We set $\bm{\beta}=(\bm{\beta}_1,\bm{\beta}_2)$, where $\bm{\beta}_1=(.5,.5,.5,.5,.5,.5)$ and $\bm{\beta}_2=\bm{0}$. Even though the features are uncorrelated, this is still a challenging setting, given the low signal to noise ratio along with heavy tails and serial dependence being present. 

The results are displayed in table \ref{table1}. The entries below ``Gaussian" correspond to the setting where both $e_i$ and $\bm{\eta}_i$ are drawn from a Gaussian distribution. Accordingly the entries under ``$t_5$" correspond to the case where $e_i$ and $\bm{\eta}_i$ are drawn from a $t_5$ distribution. We see from the results that the performance of SIS, and GLSS are comparable when $p_n=1000$, with moderate levels of temporal dependence, along with Gaussian covariates and errors. Interestingly, in this same setting, switching to heavy tails seems to have a much larger effect on the performance of SIS vs GLSS. In all cases, the performance of GLSS appears to be robust to the effects of serial correlation in the covariate and the error processes. Whereas, for SIS the performance severely deteriorates as we increase the level of serial correlation. For example, for our highest levels of serial correlation, SIS nearly always fails to contain the true model.

\begin{table}[]
\centering
\caption{Case 1}
\label{table1}
\begin{tabular}{|c|cccccc}
\hline
\multicolumn{1}{|l|}{} & \multicolumn{3}{c|}{SIS}                                                                   & \multicolumn{3}{c|}{GLSS}                                                                  \\ \hline
$(\gamma,\alpha)$      & \multicolumn{1}{c|}{(.4,.6)} & \multicolumn{1}{c|}{(.5,.8)} & \multicolumn{1}{c|}{(.6,.9)} & \multicolumn{1}{c|}{(.4,.6)} & \multicolumn{1}{c|}{(.5,.8)} & \multicolumn{1}{c|}{(.6,.9)} \\ \hline
Gaussian               & \multicolumn{1}{l}{}         & \multicolumn{1}{l}{}         & \multicolumn{1}{l}{}         & \multicolumn{1}{l}{}         & \multicolumn{1}{l}{}         & \multicolumn{1}{l|}{}         \\ \hline
$p_n=1000$                 & \multicolumn{1}{c|}{.95}     & \multicolumn{1}{c|}{.63}     & \multicolumn{1}{c|}{.15}     & \multicolumn{1}{c|}{.99}     & \multicolumn{1}{c|}{.99}     & \multicolumn{1}{c|}{.98}     \\ \hline
$p_n=5000$                 & \multicolumn{1}{c|}{.62}     & \multicolumn{1}{c|}{.11}     & \multicolumn{1}{c|}{.01}     & \multicolumn{1}{c|}{.95}     & \multicolumn{1}{c|}{.95}     & \multicolumn{1}{c|}{.97}     \\ \hline
$t_{5}$                & \multicolumn{6}{l|}{}                                                                                                                                                                   \\ \hline
$p_n=1000$                 & \multicolumn{1}{c|}{.58}     & \multicolumn{1}{c|}{.26}     & \multicolumn{1}{c|}{.06}     & \multicolumn{1}{c|}{.83}     & \multicolumn{1}{c|}{.84}     & \multicolumn{1}{c|}{.83}     \\ \hline
$p_n=5000$                 & \multicolumn{1}{c|}{.21}     & \multicolumn{1}{c|}{.01}     & \multicolumn{1}{c|}{0}       & \multicolumn{1}{c|}{.55}     & \multicolumn{1}{c|}{.49}     & \multicolumn{1}{c|}{.50}     \\ \hline
\end{tabular}
\end{table}

\noindent\textbf{Case 2: Correlated Features} 
\newline\newline
\indent We now compare the performance of SIS and GLSS for the case of correlated predictors. We have two scenarios:
\newline\newline
Scenario A: The covariate process is generated from (\ref{VAR}), with $A_1=diag(.4)$. $\bm{\eta}_t \sim N(0,\Sigma_{\eta})$, or $\bm{\eta}_t \sim t_{5}(0,V)$, with $\Sigma_{\eta}=\{.3^{|i-j|}\}_{i,j \leq p_n}$ for both cases. Therefore $\Sigma =\sum_{i=0}^{\infty} .4^{2i} \Sigma_{\eta}$. We set $\bm{\beta}_1=(1,-1,1,-1,1,-1)$ and $\bm{\beta}_2=\bm{0}$. We have an AR(1) process for the errors: $\epsilon_i=\alpha\epsilon_{i-1}+e_i$, we vary $\alpha$ from .4 to .8, and set $e_i \sim t_{5}$ or $e_i \sim N(0,1)$
\newline\newline
Scenario B: The covariate process is generated from (\ref{VAR}), with $A_1=\{.4^{|i-j|+1}\}_{i,j \leq p_n}$. And $\bm{\eta}_t \sim N(0,\Sigma_{\eta})$, or $\bm{\eta}_t \sim t_{5}(0,V)$, with $\Sigma_{\eta}=I_{p_n}$ for both cases. Therefore $\Sigma =\sum_{i=0}^{\infty} (A_{1}^{T})^{i}A_1^{i}$. We set $\bm{\beta}_1=(1,-1,1,-1,1,-1)$ and $\bm{\beta}_2=\bm{0}$. We have an AR(1) process for the errors: $\epsilon_i=\alpha\epsilon_{i-1}+e_i$, and we vary $\alpha$ from .4 to .8. The errors are generated in the same manner as in scenario A above.
\newline\newline
\indent The results are displayed in tables \ref{table2}, and \ref{table3} respectively. In scenario A, we have a Toeplitz covariance matrix for the predictors, and moderate levels of serial dependence in the predictors. The trends are similar to the ones we observed in case 1. The performance of SIS is sensitive to the effects of increasing the serial correlation in the errors, with the effect of serial dependence being more pronounced as we encounter heavy tail distributions. In contrast, increasing the level of serial dependence has a negligible impact on the performance of GLSS. For scenario B, we observe similar trends as in scenario A.

\begin{table}[]
\centering
\caption{Case 2: Scenario A}
\label{table2}
\begin{tabular}{|c|cccccc}
\hline
\multicolumn{1}{|l|}{} & \multicolumn{3}{c|}{SIS}                                                       & \multicolumn{3}{c|}{GLSS}                                                      \\ \hline
$\alpha$               & \multicolumn{1}{c|}{.4}  & \multicolumn{1}{c|}{.6}  & \multicolumn{1}{c|}{.8}  & \multicolumn{1}{c|}{.4}  & \multicolumn{1}{c|}{.6}  & \multicolumn{1}{c|}{.8}  \\ \hline
Gaussian               & \multicolumn{6}{l|}{}                                              \\ \hline
$p_n=1000$                 & \multicolumn{1}{c|}{.83} & \multicolumn{1}{c|}{.73} & \multicolumn{1}{c|}{.55} & \multicolumn{1}{c|}{.95} & \multicolumn{1}{c|}{.90} & \multicolumn{1}{c|}{.90} \\ \hline
$p_n=5000$                 & \multicolumn{1}{c|}{.38} & \multicolumn{1}{c|}{.30} & \multicolumn{1}{c|}{.07} & \multicolumn{1}{c|}{.63} & \multicolumn{1}{c|}{.63} & \multicolumn{1}{c|}{.57} \\ \hline
$t_{5}$                & \multicolumn{6}{l|}{}                                                                                                                                           \\ \hline
$p_n=1000$                 & \multicolumn{1}{c|}{.44} & \multicolumn{1}{c|}{.42} & \multicolumn{1}{c|}{.21} & \multicolumn{1}{c|}{.56} & \multicolumn{1}{c|}{.56} & \multicolumn{1}{c|}{.53} \\ \hline
$p_n=5000$                 & \multicolumn{1}{c|}{.01} & \multicolumn{1}{c|}{.04} & \multicolumn{1}{c|}{0}   & \multicolumn{1}{c|}{.16} & \multicolumn{1}{c|}{.14} & \multicolumn{1}{c|}{.16} \\ \hline
\end{tabular}
\end{table}

\begin{table}[]
\centering
\caption{Case 2: Scenario B}
\label{table3}
\begin{tabular}{|c|cccccc}
\hline
\multicolumn{1}{|l|}{} & \multicolumn{3}{c|}{SIS}                                                       & \multicolumn{3}{c|}{GLSS}                                                        \\ \hline
$\alpha$               & \multicolumn{1}{c|}{.4}  & \multicolumn{1}{c|}{.6}  & \multicolumn{1}{c|}{.8}  & \multicolumn{1}{c|}{.4}  & \multicolumn{1}{c|}{.6}   & \multicolumn{1}{c|}{.8}   \\ \hline
Gaussian               & \multicolumn{6}{l|}{}                                              \\ \hline
$p_n=1000$                 & \multicolumn{1}{c|}{.90} & \multicolumn{1}{c|}{.82} & \multicolumn{1}{c|}{.68} & \multicolumn{1}{c|}{.99} & \multicolumn{1}{c|}{1.00} & \multicolumn{1}{c|}{1.00} \\ \hline
$p_n=5000$                 & \multicolumn{1}{c|}{.71} & \multicolumn{1}{c|}{.64} & \multicolumn{1}{c|}{.26} & \multicolumn{1}{c|}{.95} & \multicolumn{1}{c|}{.97}  & \multicolumn{1}{c|}{.98}  \\ \hline
$t_{5}$                & \multicolumn{6}{l|}{}                                                                                                                                             \\ \hline
$p_n=1000$                 & \multicolumn{1}{c|}{.76} & \multicolumn{1}{c|}{.63} & \multicolumn{1}{c|}{.40} & \multicolumn{1}{c|}{.92} & \multicolumn{1}{c|}{.90}  & \multicolumn{1}{c|}{.92}  \\ \hline
$p_n=5000$                 & \multicolumn{1}{c|}{.37} & \multicolumn{1}{c|}{.26} & \multicolumn{1}{c|}{.06} & \multicolumn{1}{c|}{.76} & \multicolumn{1}{c|}{.74}  & \multicolumn{1}{c|}{.75}  \\ \hline
\end{tabular}
\end{table}

\noindent\textbf{Case 3: Two Stage Selection} 
\newline\newline
\indent We test the performance of the two stage GLSS-AdaLasso procedure. We also compare its performance with using the adaptive Lasso on its own. We use the Lasso as our initial estimator and select $\lambda_{I,n}$ using the modified BIC introduced in \cite{Wangetal2009}. \cite{Tang2013} extended the theory of the modified BIC to the case where $p>n$, $p=o(n^a), \textrm{ }a>1$, and independent observations. We conjecture that the same properties hold in a time series setting. We have two scenarios:
\newline\newline
Scenario A: The covariate process is generated from (\ref{VAR}), with $A_1=diag(.4)$. And $\bm{\eta}_t \sim N(0,\Sigma_{\eta})$, or $\bm{\eta}_t \sim t_{5}(0,V)$, with $(\Sigma_{\eta})_{i,j}=\{.8^{|i-j|}\}_{i,j \leq p_n}$. We set $\bm{\beta}_1=(.5,.5,.5,.5,.5,.5)$ and $\bm{\beta}_2=\bm{0}$. We have an AR(1) process for the errors: $\epsilon_i=\alpha\epsilon_{i-1}+e_i$, we vary $\alpha$ from .4 to .6, and set $e_i \sim t_{5}$ or $e_i \sim N(0,1)$
\newline\newline
Scenario B: The covariate process is generated from (\ref{VAR}), with $A_1=\{.4^{|i-j|+1}\}_{i,j \leq p_n}$. And $\bm{\eta}_t \sim N(0,\Sigma_{\eta})$, or $\bm{\eta}_t \sim t_{5}(0,V)$, with $(\Sigma_{\eta})_{i,j}=.8$ for $i \neq j$ and 1 otherwise. We set $\bm{\beta}_1=(.75,.75,.75,.75,.75,.75)$ and $\bm{\beta}_2=\bm{0}$. The errors are generated the same as in scenario A above.
\begin{table}[]
\centering
\caption{Case 3: Scenario A}
\label{table4}
\begin{tabular}{|c|c|c|c|c|c|c|}
\hline
\multicolumn{1}{|l|}{} & \multicolumn{3}{c|}{GLSS-AdaLasso} & \multicolumn{3}{c|}{AdaLasso} \\ \hline
$\alpha$               & .4         & .5        & .6        & .4       & .5       & .6      \\ \hline
Gaussian               & \multicolumn{6}{l|}{}                                              \\ \hline
$p_n=1000$                 & .79        & .65       & .49       & .60      & .49      & .35     \\ \hline
$p_n=5000$                 & .84        & .65       & .46       & .66      & .43      & .29     \\ \hline
$t_{5}$                & \multicolumn{6}{l|}{}                                              \\ \hline
$p_n=1000$                 & .45        & .37       & .23       & .32      & .22      & .14     \\ \hline
$p_n=5000$                 & .36        & .32       & .18       & .24        & .18      & .10     \\ \hline
\end{tabular}
\end{table}

\begin{table}[]
\centering
\caption{Case 3: Scenario B}
\label{table5}
\begin{tabular}{|c|c|c|c|c|c|c|}
\hline
\multicolumn{1}{|l|}{} & \multicolumn{3}{c|}{GLSS-AdaLasso} & \multicolumn{3}{c|}{AdaLasso} \\ \hline
$\alpha$               & .4         & .5        & .6        & .4       & .5       & .6      \\ \hline
Gaussian               & \multicolumn{6}{l|}{}                                              \\ \hline
$p_n=1000$                 & .86        & .72       & .59       & .57      & .49      & .34     \\ \hline
$p_n=5000$                 & .69        & .59       & .43       & .60      & .44      & .25     \\ \hline
$t_{5}$                & \multicolumn{6}{l|}{}                                              \\ \hline
$p_n=1000$                 & .48        & .41       & .22       & .30      & .19      & .10     \\ \hline
$p_n=5000$                 & .35        & .25       & .19       & .25      & .16      & .11     \\ \hline
\end{tabular}
\end{table}

In both scenarios we have a high degree of correlation between the predictors, low signal to noise ratio, along with mild to moderate levels of serial correlation in the covariate and error processes. The results are displayed in tables \ref{table4} and \ref{table5} for scenarios A and B respectively. We observe that the two stage estimator outperforms the standalone adaptive Lasso for both scenarios, with the difference being more pronounced in scenario B. For both scenarios, going from mild to moderate levels of serial correlation in the errors appears to significantly deteriorate the performance of the adaptive Lasso. This affects our results for the two stage estimator primarily at the second stage of selection. This sensitivity to serial correlation appears to increase as we encounter heavy tailed distributions.

\section{Real Data Example: Forecasting Inflation Rate}\label{sec:section 7}

In this section we focus on forecasting the 12 month ahead inflation rate. We use two major monthly price indexes as measures of inflation: the consumer price index (CPI), and the producer price index less finished goods (PPI). Specifically we are forecasting:
\begin{equation} y_{t+12}^{12}=100 \times \log\left(\frac{CPI_{t+12}}{CPI_{t}}\right), \textrm{ or } y_{t+12}^{12}=100 \times \log\left(\frac{PPI_{t+12}}{PPI_{t}}\right)  \end{equation}
Therefore the above quantities are approximately the percentage change in CPI or PPI over 12 months. Our data was obtained from the supplement to \cite{Jurado15}, and it consists of 132 monthly macroeconomic variables from January 1960 to December 2011, for a total of 624 observations. Apart from $\log(CPI)$ and $\log(PPI)$ which we are treating as $I(1)$, the remaining 130 macroeconomic time series have been transformed to achieve stationarity according to \cite{Jurado15}. Treating $\log(CPI)$, and $\log(PPI)$ as $I(1)$, has been found to provide an adequate description of the data according to \cite{Stock2002},\cite{Stock1999},\cite{MM2016}.

We consider forecasts from 8 different models. Similar to \cite{MM2016, Stock2002} our benchmark model is an AR(4) model: $\hat{y}_{t+12}^{12}=\hat{\alpha}_0+\sum_{i=0}^{3} \hat{\alpha}_i y_{t-i}$
, where $y_t=1200 \times log(CPI_t/CPI_{t-1})$ when forecasting CPI, and $y_t=1200 \times log(PPI_t/PPI_{t-1})$ when forecasting PPI. For comparison, we also consider an AR(4) model augmented with 4 factors. Specifically we have: 
\begin{equation} \hat{y}_{t+12}^{12}=\hat{\beta_0}+\sum_{i=0}^{3} \hat{\alpha}_i y_{t-i}+\hat{\bm{\gamma}} \bm{\hat{F}}_{t} \label{ar4} \end{equation}
Where $\bm{\hat{F}}_{t}$ are four factors which are estimated by taking the first four principal components of the 131 predictors along with three of their lags. We also consider forecasts estimated by the Lasso and the adaptive Lasso. And lastly we include forecasts estimated by the following two stage procedures: GLSS-Lasso, GLSS-adaptive Lasso, SIS-Lasso, and SIS-Adaptive Lasso. Our forecasting equation for the penalized regression and two stage forecasts is:
\begin{equation} y_{t+12}^{12}= \beta_0+\bm{x}_t\bm{\beta} + \epsilon_{t+12}^{12} \label{factors} \end{equation} 
Where $\bm{x}_t$ consists of $y_t$ and three of its lags along with the other 131 predictors and three of their lags, additionally we also include the first four estimated factors $\hat{F}_t$. Therefore $\bm{x}_t$ consists of 532 covariates in total. For each of the two stage methods, we set $d_n = \lceil n/\log(n) \rceil =73 $ for the first stage screening procedure. For the second stage selection, and the standalone lasso/adaptive lasso models, we select the tuning parameters and initial estimators using the approach described in section \ref{sec:section 6}.

We utilize a rolling window scheme, where the first simulated out of sample forecast was for January 2000 (2000:1). To construct this forecast, we use the observations between 1960:6 to 1999:1 (the first five observations are used in forming lagged covariates and differencing) to estimate the factors, and the coefficients. Therefore for the models described above, $t$=1960:6 to 1998:1. We then use the regressor values at $t$=1999:1 to form our forecast for 2000:1. Then the next window uses observations from 1960:7 to 1999:2 to forecast 2000:2. Using this scheme, in total we have 144 out of sample forecasts, and for each window we use $n=451$ observations for each regression model. The set-up described above allows us to simulate real-time forecasting.

Table \ref{table6} shows the mean squared error (MSE), and the mean absolute error (MAE) of the resulting forecasts relative to the MSE and MAE of the baseline AR(4) forecasts. We observe that the two stage GLSS methods clearly outperform the benchmark AR(4) model, and appear to have the best forecasting performance overall for both CPI and PPI, with the difference being more substantial when comparing by MSE. Furthermore GLSS-lasso and GLSS-adaptive Lasso do noticeably better than their SIS based counterparts with the differences being greater when forecasting PPI. We also note that the widely used factor augmented autoregressions do worse than the benchmark model AR(4) model.
\begin{table}[]
\centering
\caption{Inflation Forecasts: 12 month horizon}
\label{table6}
\begin{tabular}{|c|c|c|c|c|}
\hline
                    & CPI-MSE & CPI-MAE & PPI-MSE & PPI-MAE \\ \hline
AR(4)              & 1.00     & 1.00     & 1.00     & 1.00     \\ \hline
Lasso               & .94      & .99      & .69      & .89      \\ \hline
Adaptive Lasso      & 1.08     & 1.05     & .80      & .99      \\ \hline
SIS-Lasso           & .96      & .97      & .76      & .95      \\ \hline
SIS-Adaptive Lasso  & 1.03     & 1.00     & .82      & 1.00     \\ \hline
GLSS-Lasso          & .84      & .98      & .65      & .87      \\ \hline
GLSS-Adaptive Lasso & .94      & 1.00     & .70      & .92      \\ \hline
AR(4) + 4 Factors           & 1.18     & .99      & 1.08     & 1.09     \\ \hline
\end{tabular}
\end{table}

\section{Discussion}\label{sec:section 8}

In this paper we have analyzed the sure screening properties of SIS in the presence of dependence and heavy tails in the covariate and error processes. In addition, we have proposed a generalized least squares screening (GLSS) procedure, which utilizes the serial correlation present in the data when estimating our marginal effects. Lastly, we analyzed the theoretical properties of the two stage screening and adaptive Lasso estimator using the Lasso as our initial estimator. These results will allow practitioners to apply these techniques to many real world applications where the assumption of light tails and independent observations fails. 

There are plenty of avenues for further research, for example extending the theory of model-free screening methods such as distance correlation, or robust measures of dependence such as rank correlation to the setting where we have heavy tails and dependent observations. Other possibilities include extending the theory in this work, or to develop new methodology for long range dependent processes, or certain classes of non-stationary processes. Long range dependence, is a property which is prominent in a number of fields such as physics, telecommunications, econometrics, and finance (see \cite{Samorodnitsky2006} and references therein). If we assume the error process ($\epsilon_i$) is long range dependent, then by the proof of Theorem 1 in \cite{Wu2009} we have $\Delta_{0,q}(\bm{\epsilon})=\infty$. A similar result holds for the covariate process, therefore we may need to use a new dependence framework when dealing with long range dependent processes. Lastly, developing new methodology which aims to utilize the unique qualities of time series data such as serial dependence, and the presence of lagged covariates, would be a particularly fruitful area of future research.

\section{Appendix}
\subsection{Proofs of Results}

\begin{proof} [Proof of Theorem 1]
  $ $\newline

We first prove part (i), we start by obtaining a bound on:
\begin{equation}  P(|\hat{\rho}_j -\rho_j|>c_2 n^{-\kappa}) \end{equation}
Let $T_1=\sum_{t=1}^n X_{tj}^2/n$, $T_2=\sum_{t=1}^n X_{tj}Y_t/n$. Then $|\hat{\rho}_j-\rho_j| = |T_2/T_1-E(T_2)/E(T_1)|=\\
 |(T_1^{-1}-E(T_1)^{-1})(T_2-E(T_2))+(T_2-E(T_2))/E(T_1)+(T_1^{-1}-E(T_1)^{-1})E(T_2)|$
   $ $\newline

 \noindent Therefore:
 \begin{align} P(|\hat{\rho}_j -\rho_j|>c_2 n^{-\kappa})&\leq P(|(T_1^{-1}-E(T_1)^{-1})(T_2-E(T_2))|>c_2n^{-\kappa}/3)\label{triple1}\\
 &+P(|(T_2-E(T_2))/E(T_1)>c_2n^{-\kappa}/3|)\label{triple2}\\
 &+P(|(T_1^{-1}-E(T_1)^{-1})E(T_2)|>c_2n^{-\kappa}/3)\label{triple3}
 \end{align}
For the RHS of (\ref{triple1}), we obtain:
\begin{equation} (\ref{triple1}) \leq P(|(T_2-E(T_2))|>Cn^{-\kappa/2}) + P(|(T_1^{-1}-E(T_1)^{-1})|>Cn^{-\kappa/2}) \label{both}\end{equation}
Therefore it suffices to focus on terms (\ref{triple2}), (\ref{triple3}). For (\ref{triple2}), recall that

Recall that $T_2= \sum_{t=1}^n X_{tj}(\bm{x}_t\bm{\beta} + \epsilon_t)/n = \sum_{t=1}^n X_{tj}(\sum_{k=1}^{p_n}X_{tk}\beta_k + \epsilon_t)/n $. Now we let:
\begin{equation}
S_1= \sum_{t=1}^n X_{tj}(\sum_{k=1}^{p_n}X_{tk}\beta_k)/n \textrm{ and }S_2=\sum_{t=1}^n  X_{tj}\epsilon_t/n
\end{equation}
By Condition B, $E(X_{tj}\epsilon_t)=0$, therefore
\begin{equation} 
P(|T_2 -E(T_2)|>C n^{-\kappa}) \leq P(|S_1-E(S_1)| > C n^{-\kappa}/2) + P(|S_2| > Cn^{-\kappa})
\label{rhob}\end{equation}
Recall that $\sum_{k=1}^{p_n} \mathbbm{1}_{|\beta_k |>0}=s_n$, thus:
\begin{equation}  P\left(|S_1-E(S_1)| > \frac{c_2 n^{-\kappa}}{2}\right) \leq \sum_{k \in M_*}P\left(|\sum_{t=1}^n \frac{X_{tj}(X_{tk}\beta_k)}{n} - \beta_kE(X_{tj}X_{tk}) |> \frac{c_2 n^{-\kappa}}{2s_n} \right) \label{s1} \end{equation}
From section 2 in \cite{WuandWu2016}: $||X_{ij}||_{r} \leq  \Delta_{0,r}(\bm{X}_j) \leq \Phi_{0,r}(\bm{x})$. Using this we compute the cumulative functional dependence measure of $X_{tk}X_{tj}$ as: 
\begin{align}
\sum_{t=m}^{\infty} ||X_{tj}X_{tk}-X_{tj}^{*}X_{tk}^{*}||_{r/2} & \leq \sum_{t=m}^{\infty} (||X_{tj}||_{r}||X_{tk}-X_{tk}^{*}||_{r} + ||X_{tk}||_{r}||X_{tj}-X_{tj}^{*}||_{r})\nonumber \\
&\leq 2\Phi_{0,r}(\bm{x}) \Phi_{m,r}(\bm{x})= O(m^{-\alpha_x}) \label{cumx}
 \end{align}
Therefore we obtain: $\sup_m(m+1)^{\alpha_x}\sum_{t=m}^{\infty} ||X_{tj}X_{tk}-X_{tj}^{*}X_{tk}^{*}||_{r/2} \leq 2K_{x,r}^2$. Combining this with (\ref{s1}), and Theorem 2 in \cite{WuandWu2016}, yields:

\begin{align}
P\left(|S_1-E(S_1)| > \frac{c_2 n^{-\kappa}}{2}\right) \leq & Cs_n\left(\frac{n^{\omega} K_{x,r}^{r}}{(n/s_n)^{r/2-r\kappa/2}}+ \exp\left(-\frac{n^{1-2\kappa}}{s_n^2K_{x,r}^4}\right)\right)
\label{boundx}\end{align}
Similarly for $X_{tj}\epsilon_t$, by using Holder's inequality we obtain:
\begin{align}
\sum_{t=m}^{\infty} ||X_{tj}\epsilon_{t}-X_{tj}^{*}\epsilon_{t}^{*}||_{\tau} & \leq \sum_{t=m}^{\infty} (||X_{tj}||_{r}||\epsilon_{t}-\epsilon_{t}^{*}||_{q} + ||\epsilon_{t}||_{q}||X_{tj}-X_{tj}^{*}||_{r})\nonumber \\
&\leq \Delta_{0,q}(\bm{\epsilon})\Phi_{m,r}(\bm{x})+\Delta_{m,q}(\bm{\epsilon})\Phi_{0,r}(\bm{x})= O(m^{-\alpha}) \label{cume}
 \end{align}
Therefore $\sup_m (m+1)^{\alpha}\sum_{t=m}^{\infty} ||X_{tj}\epsilon_{t}-X_{tj}^{*}\epsilon_{t}^{*}||_{\tau} \leq 2K_{x,r}K_{\epsilon,q}$. Using Theorem 2 in \cite{WuandWu2016}, we obtain:
\begin{equation}
P\left(|S_2| > \frac{c_2 n^{-\kappa}}{2}\right) \leq O\left(\frac{n^{\iota} K_{x,r}^{\tau}K_{\epsilon,q}^{\tau}}{n^{\tau-\tau\kappa}}+ \exp\left(-\frac{n^{1-2\kappa}}{K_{x,r}^2K_{\epsilon,q}^2}\right)\right)
\label{bounde}\end{equation}

For (\ref{triple3}), assuming $E(X_{ij}^2)=O(1) \textrm{ } \forall j \leq p_n$, and $ \max_{j \leq p_n}E(X_{tj}Y_t) < L<\infty$ we obtain: 
\begin{equation}(\ref{triple3}) \leq P(|T_1-E(T_1)|>T_1Cn^{-\kappa}) \leq P(|T_1-E(T_1)|>MCn^{-\kappa}) + P(T_1<M)\label{inverse} \end{equation}
We set $M < \min_{j\leq p_n} E(X_{ij}^2)-\epsilon$, for $\epsilon>0$. We then have:
\begin{equation} P(T_1<M) \leq P(|T_1-E(T_1)| > E(T_1)-M) \end{equation}
We can then bound the above two equations similar to (\ref{boundx}).
By combining (\ref{both})(\ref{rhob}),(\ref{boundx}),(\ref{bounde}),(\ref{inverse}), along with union bound we obtain:
\begin{align*} P\left(\max_{j \leq p_n}|\hat{\rho}_j -\rho_j|>c_2 n^{-\kappa}\right)
&\leq O\left(s_np_n\left[\frac{n^{\omega} K_{x,r}^{r}}{(n/s_n)^{r/2-r\kappa/2}}+ \exp\left(-\frac{n^{1-2\kappa}}{s_n^2K_{x,r}^4}\right) \right]\right)\nonumber \\
& + O\left(p_n\left[\frac{n^{\iota} K_{x,r}^{\tau}K_{\epsilon,q}^{\tau}}{n^{\tau-\tau\kappa}}+ \exp(-n^{1-2\kappa}/K_{x,r}^2K_{\epsilon,q}^2)\right]\right)
\end{align*}

To prove part (ii), we follow the steps in the proof of Theorem 2 in \cite{LiGetal2012}. Let $\mathcal{A}_n=\{\max_{k \in M_*}|\hat{\rho}_k-\rho_k|\leq \frac{c_1 n^{-\kappa}}{2}\}$. On the set $\mathcal{A}_n$, by Condition \textbf{A}, we have: 
\begin{equation} |\hat{\rho}_k| \geq |\rho_k|-|\hat{\rho}_k - \rho_k| \geq c_1 n^{-\kappa}/2, \textrm{  }\forall k \in M_*\end{equation}
Hence by our choice of $\gamma_n$, we obtain $P\left(\mathcal{M}_*\subset \hat{\mathcal{M}}_{\gamma_n} \right) > P(\mathcal{A}_n)$. By applying part (i), the result follows.
\newline\newline
 For part (iii) we follow the steps in the proof of Theorem 3 in \cite{LiGetal2012}. Using $Var(Y_t), Var(X_{tj})=O(1) \textrm{ for } j \leq p_n$, along with Condition B, we obtain $\sum_{k=1}^{p_n} \rho_k^2=O(\lambda_{\max}(\Sigma))$. Then on the set $\mathcal{B}_n=\{\max_{k \leq p_n}|\hat{\rho}_k-\rho_k|\leq c_4 n^{-\kappa}\}$, the number of $\{k: |\hat{\rho}_k| > 2c_4n^{-\kappa}\}$ cannot exceed the number of $\{k: |\rho_k| > c_4n^{-\kappa}\}$ which is bounded by $O(n^{2\kappa}\lambda_{\max}(\Sigma))$. Therefore, by setting $c_4=c_{3}/2$ we obtain:
\begin{equation} P\left(|\hat{\mathcal{M}}_{\gamma_n}| <  O(n^{2\kappa}\lambda_{\max}(\Sigma)) \right) > P(\mathcal{B}_n) \end{equation}
The result then follows from part (i).

\end{proof}

\begin{proof} [Proof of Theorem \ref{theorem4}]
  $ $\newline

We follow the steps from the proof of Theorem \ref{theorem2}. Let $\bm{T}=(T_{1},\ldots,T_n)$ where $T_i=X_{ij}X_{ik}$, and let $\bm{R}=(R_{1},\ldots,R_n)$ where $R_i= X_{ij}\epsilon_i$. We need to bound the sums: $\sum_{i=1}^n (T_i-E(T_i))/n$ and $\sum_{i=1}^n R_i/n$.

By Theorem 1 in \cite{Wu2005}, $\Theta_{q}(\bm{T}) \leq \Delta_{0,q}(\bm{T})$, and from Section 2 in \cite{WuandWu2016}: $||X_{ij}||_{q} \leq  \Delta_{0,q}(\bm{X}_j) \leq \Phi_{0,q}(\bm{x})$. Additionally, by Holders inequality we have
\begin{equation}\Delta_{0,q}(\bm{T}) \leq \sum_{t=0}^{\infty} (||X_{tj}||_{2q}||X_{tk}-X_{tk}^{*}||_{2q} + ||X_{tk}||_{2q}||X_{tj}-X_{tj}^{*}||_{2q}) \leq 2\Phi_{0,2q}^2(\bm{x})  \end{equation}
Using these, along with Condition D we obtain:
\begin{equation} \sup_{q \geq 4} q^{-2\tilde{\alpha}_x}\Theta_{q}(\bm{T}) \leq \sup_{q \geq 4} q^{-2\tilde{\alpha}_x}\Delta_{0,q}(\bm{T}) \leq \sup_{q \geq 4} 2q^{-2\tilde{\alpha}_x}\Phi_{0,2q}^{2}(\bm{x}) < \infty \end{equation}
Combining the above and using Theorem 3 in \cite{WuandWu2016}, we obtain:
\begin{equation} P\left(\bigg|\sum_{i=1}^nT_i-E(T_i)\bigg| > \frac{c_2 n^{1-\kappa}}{2}\right) \leq  C\exp\left(-\frac{n^{1/2-\kappa}}{\upsilon_x^2}\right)^{\tilde{\alpha}} \end{equation}
Similarly, using the same procedure we obtain:
\begin{equation}P\left(\bigg|\sum_{i=1}^nR_i\bigg| > \frac{c_2 n^{1-\kappa}}{2}\right) \leq C\exp\left(-\frac{n^{1/2-\kappa}}{\upsilon_x\upsilon_{\epsilon}}\right)^{\tilde{\alpha}'} \end{equation}
Now using the above bounds and following the steps in the proof of Theorem \ref{theorem2} we obtain the results. 

\end{proof}

\begin{proof} [Proof of Lemma 1]
  $ $\newline

By the proof of Theorem 2 in \cite{Wu2009}, we have:
\begin{equation}||\hat{\Sigma}_{k,l_n}-\Sigma_{k}||_2 \leq 2\sum_{i=1}^{l_n}|\hat{\gamma}_{i,k}-\gamma_{i,k}|+2\sum_{i=l_n+1}^{\infty}|\gamma_{i,k}| \label{split1}\end{equation}
Recall that $\hat{\rho}_k$ is the OLS estimate of the marginal projection, by (\ref{marg}) we have $\hat{\epsilon}_{t,k}=\epsilon_{t,k}-X_{tk}(\hat{\rho}_k-\rho_k)=\epsilon_{t,k}-X_{tk}(\frac{\sum_{j=1}^{n} X_{jk}\epsilon_{j,k}/n}{\sum_{j=1}^{n} X_{jk}^2/n})$. Which gives us:
\begin{align}\hat{\gamma}_{i,k}= \frac{1}{n}&\sum_{t=1}^{n-|i|} \Biggl[\epsilon_{t,k}\epsilon_{t+|i|,k} - \epsilon_{t,k} X_{t+|i|,k}\bigg(\sum_{j=1}^{n} X_{jk}\epsilon_{j,k}/n\bigg) \\
&-\epsilon_{t+|i|,k} X_{tk}\bigg(\sum_{j=1}^{n} X_{jk}\epsilon_{j,k}/n\bigg)+X_{tk}X_{t+|i|,k}\bigg(\sum_{j=1}^{n} X_{jk}\epsilon_{j,k}/n\bigg)^2\Biggr] \end{align}
By Condition E and $l_n=c\log(n)$, for sufficiently large $c$, we have: $\sum_{i=l_n+1}^{\infty}|\gamma_{i,k}|=o(n^{-\kappa})$, so we focus on the term $\sum_{i=1}^{l_n}|\hat{\gamma}_{i,k}-\gamma_{i,k}|$ in (\ref{split1}). We then have:
\begin{equation} P(||\hat{\Sigma}_{k,l_n}-\Sigma_{k}||_2 > cn^{-\kappa}) \leq \sum_{i=1}^{l_n} P\left(|\hat{\gamma}_{i,k}-\gamma_{i,k}|>cn^{-\kappa}/l_n\right) \label{main1}\end{equation}
And
\begin{align} P(|\hat{\gamma}_{i,k}-\gamma_{i,k}|>&\frac{cn^{-\kappa}}{l_n}) \leq P\Biggl(\bigg|\frac{1}{n}\sum_{t=1}^{n-|i|} \epsilon_{t,k}\epsilon_{t+|i|,k}-E(\frac{1}{n}\sum_{t=1}^{n-|i|} \epsilon_{t,k}\epsilon_{t+|i|,k})\bigg| \label{pre}\\
&+\bigg|E(\frac{1}{n}\sum_{t=1}^{n-|i|} \epsilon_{t,k}\epsilon_{t+|i|,k})-\gamma_{i,k}\bigg|>cn^{-\kappa}/4l_n\Biggr)\label{A}\\
&+ P\left(\bigg|\frac{1}{n}\sum_{t=1}^{n-|i|}\epsilon_{t,k} X_{t+|i|,k}\left(\frac{\sum_{j=1}^{n} X_{jk}\epsilon_{j,k}/n}{\sum_{j=1}^{n} X_{jk}^2/n}\right)\bigg|> cn^{-\kappa}/4l_n\right)\label{B}\\
&+P\left(\bigg|\frac{1}{n}\sum_{t=1}^{n-|i|}\epsilon_{t+|i|,k} X_{tk}\left(\frac{\sum_{j=1}^{n} X_{jk}\epsilon_{j,k}/n}{\sum_{j=1}^{n} X_{jk}^2/n}\right)\bigg|> cn^{-\kappa}/4l_n\right)\label{C}\\
&+P\left(\bigg|\frac{1}{n}\sum_{t=1}^{n-|i|}X_{tk}X_{t+|i|,k}\left(\frac{\sum_{j=1}^{n} X_{jk}\epsilon_{j,k}/n}{\sum_{j=1}^{n} X_{jk}^2/n}\right)^2\bigg|>cn^{-\kappa}/4l_n\right)\label{D}
\end{align}
For (\ref{A}), the bias $ |E(\sum_{t=1}^{n-|i|} \frac{\epsilon_{t,k}\epsilon_{t+|i|,k}}{n}-\gamma_{i,k}| \leq \frac{i\gamma_{i,k}}{n} $. Using the techniques in the proof of Theorem \ref{theorem2} we can then bound (\ref{pre}). For (\ref{B}) we have:
\begin{equation} (\ref{B}) \leq P\left(\frac{|\sum_{j=1}^{n} X_{jk}\epsilon_{j,k}/n|}{\sum_{j=1}^{n} X_{jk}^2/n}> cn^{-\kappa}/Ml_n\right) + P\left(\bigg|\frac{1}{n}\sum_{t=1}^{n-|i|}\epsilon_{t,k} X_{t+|i|,k}\bigg|>M\right) \end{equation}
And $P\left(\left|\frac{1}{n}\sum_{t=1}^{n-|i|}\epsilon_{t,k} X_{t+|i|,k}\right|>M\right)$
\begin{equation} \leq P\left(\bigg|\frac{1}{n}\sum_{t=1}^{n-|i|}\epsilon_{t,k} X_{t+|i|,k}-E(\frac{1}{n}\sum_{t=1}^{n-|i|}\epsilon_{t,k} X_{t+|i|,k})\bigg|> M-\bigg|E(\frac{1}{n}\sum_{t=1}^{n-|i|}\epsilon_{t,k} X_{t+|i|,k})\bigg|\right) \end{equation}
And we set $M> \max_{k \leq p_n}\max_{i \leq l_n}2|E(\epsilon_{t,k}X_{t+|i|,k})|+\epsilon$, for some $\epsilon > 0$. Similarly we have $P(\frac{|\sum_{j=1}^{n} X_{jk}\epsilon_{j,k}/n|}{\sum_{j=1}^{n} X_{jk}^2/n}> cn^{-\kappa}/Ml_n)$
\begin{equation} \leq P\left(\bigg|\sum_{j=1}^{n} X_{jk}\epsilon_{j,k}/n\bigg|> M_1Cn^{-\kappa}/l_n\right) +  P\left(\sum_{j=1}^{n} X_{jk}^2/n < M_1\right)\end{equation}
Where we set $M_1< \min_{j\leq p_n} E(X_{ij}^2)-\epsilon$, for $\epsilon>0$. The same method we used for (\ref{pre}) can be applied to (\ref{C}), (\ref{D}). Using the techniques in the proof of Theorem \ref{theorem2}, and (\ref{main1}), we obtain the result. For (ii), we follow the same procedure as in (i), and apply the methods seen in the proof of Theorem \ref{theorem4}.

\end{proof}

\begin{proof} [Proof of Theorem \ref{theorem6}]
  $ $\newline

For (i), as before we start with a bound on: $P(|\hat{\beta}_k^{M} -\beta_k^{M}|>c_2 n^{-\kappa})$. Using Condition E, we can write:
 \begin{equation} \beta_k^{M}=(E(\bm{X}_k^{T}\Sigma_k^{-1}\bm{X}_k)/n)^{-1} E(\bm{X}_k^T\Sigma_k^{-1}\bm{y}^k/n)+O(1/n) \nonumber \end{equation}
 After combining this with (\ref{marg}), it suffices to obtain a bound for:
\begin{equation} P(|(\bm{X}_k^{T}\hat{\Sigma}_{k,l_n}^{-1}\bm{X}_k/n)^{-1} \bm{X}_k^T\hat{\Sigma}_{k,l_n}^{-1}\bm{\epsilon}^k/n-(E(\bm{X}_k^{T}\Sigma_k^{-1}\bm{X}_k))^{-1} E(\bm{X}_k^T\Sigma_k^{-1}\bm{\epsilon}^k)|>c n^{-\kappa}) \end{equation}
Similar to the proof of Theorem \ref{theorem2} we let $T_1=\bm{X}_k^{T}\hat{\Sigma}_{k,l_n}^{-1}\bm{X}_k/n$, \\$T_2= \bm{X}_k^T\hat{\Sigma}_{k,l_n}^{-1}\bm{\epsilon}^k/n$, $T_3=E(\bm{X}_k^{T}\Sigma_k^{-1}\bm{X}_k)$, and $T_4=E(\bm{X}_k^T\Sigma_k^{-1}\bm{\epsilon}^k)$. Then:
\begin{align} 
|\hat{\beta}_k^{M} -\beta_k^{M}| = |T_2/T_1-T_4/T_3|&= |(T_1^{-1}-T_3^{-1})(T_2-T_4) \nonumber \\ 
&+(T_2-T_4)/T_3+(T_1^{-1}-T_3^{-1})T_4| \label{splitgls} \end{align}
Following the steps in the proof of Theorem \ref{theorem2}, it suffices to focus on the terms: 
\begin{equation}
P(|T_1-T_3|>cn^{-\kappa}) \textrm{ and } P(|T_2-T_4|>cn^{-\kappa}) 
\end{equation}
We then have:
\begin{align} P(|T_2-T_4|>C n^{-\kappa}) &\leq P(|\bm{X}_k^T(\hat{\Sigma}_{k,l_n}^{-1}-\Sigma_k^{-1})\bm{\epsilon}^k/n|>C n^{-\kappa}/2)\label{combine}\\
&+ P(|\bm{X}_k^T\Sigma_k^{-1}\bm{\epsilon}^k/n-E(\bm{X}_k^T\Sigma_k^{-1}\bm{\epsilon}^k)|>C n^{-\kappa}/2)  \nonumber \end{align}
We first deal with the term $\bm{X}_k^T\Sigma_k^{-1}\bm{\epsilon}^k/n$. We can rewrite this term as $\tilde{\bm{X}}_k^T\tilde{\bm{\epsilon}}^k/n$, where $\tilde{\bm{X}}_k=V_k\bm{X}_k, \tilde{\bm{\epsilon}}^k=V_k\bm{\epsilon}^k/n$, $V_k$ is a lower triangle matrix and the square root of $\Sigma_k^{-1}$. Ignoring the first $L_k$ observations, we can express: 
\begin{equation}\tilde{\bm{X}}_k^T\tilde{\bm{\epsilon}}^k/n=\sum_{t=L_k+1}^{n} \left(\epsilon_{t,k}-\sum_{i=1}^{L_k} \alpha_{i,k}\epsilon_{t-i,k}\right)\left(X_{t,k}-\sum_{i=1}^{L_k} \alpha_{i,k}X_{t-i,k}\right)\end{equation}
, where $(\alpha_{1,k},\ldots,\alpha_{L_k,k})$ are the autoregressive coefficients of the process $\epsilon_{t,k}$.

We compute the cumulative functional dependence measure of $\tilde{X}_{t,k}\tilde{\epsilon}_{t,k}$ as: 
\begin{align}
\sum_{l=m}^{\infty} ||\tilde{X}_{l,k}\tilde{\epsilon}_{l,k}-\tilde{X}_{l,k}^{*}\tilde{\epsilon}_{l,k}^{*}||_{\tau'} & \leq \sum_{l=m}^{\infty} (||\tilde{X}_{l,k}||_{r}||\tilde{\epsilon}_{l,k}-\tilde{\epsilon}_{l,k}^{*}||_{q'} + ||\tilde{\epsilon}_{l,k}||_{q'}||\tilde{X}_{l,k}-\tilde{X}_{l,k}^{*}||_{r}) 
\end{align}
We have: $||\tilde{X}_{l,k}-\tilde{X}_{l,k}^{*}||_{r} \leq ||X_{l,k}-X_{l,k}^{*}||_{r}+\sum_{i=1}^{L_k}|\alpha_i| ||X_{k,l-i}-X_{k,l-i}^{*}||_r$. And by our assumptions $||\tilde{\epsilon}_{l,k}-\tilde{\epsilon}_{l,k}^{*}||_{q'}=0$, for $l>0$. From which we obtain:
\begin{align}
\sum_{l=m}^{\infty} ||\tilde{X}_{l,k}\tilde{\epsilon}_{l,k}-\tilde{X}_{l,k}^{*}\tilde{\epsilon}_{l,k}^{*}||_{\tau'}  & \leq C\Phi_{m,r}=O(m^{-\alpha_x})
\end{align}
Using Theorem 2 in \cite{WuandWu2016}:
\begin{equation}  P(|\bm{X}_k^T\Sigma_k^{-1}\bm{\epsilon}^k/n-E(\bm{X}_k^T\Sigma_k^{-1}\bm{\epsilon}^k)|>Cn^{-\kappa}) \leq O\left(\frac{n^{\iota}K_{x,r}^{\tau'}}{n^{\tau'-\tau'\kappa}}\right) \label{first}\end{equation}
For the term $|\bm{X}_k^T(\hat{\Sigma}_{k,l_n}^{-1}-\Sigma_k^{-1})\bm{\epsilon}^k/n|$, using Cauchy-Schwarz inequality:
\begin{equation} \frac{|\bm{X}_k^T(\hat{\Sigma}_{k,l_n}^{-1}-\Sigma_k^{-1})\bm{\epsilon}^k/n|}{||\bm{X}_k||_2||\bm{\epsilon}^k||_2} \leq \frac{||(\hat{\Sigma}_{k,l_n}^{-1}-\Sigma_k^{-1})\bm{\epsilon}^k||_2}{n||\bm{\epsilon}^k||_2} \leq \frac{||\hat{\Sigma}_{k,l_n}^{-1}-\Sigma_k^{-1}||_2}{n}\label{cauchy} \end{equation}
Using (\ref{cauchy}) we obtain:
\begin{align}  P(|\bm{X}_k^T(\hat{\Sigma}_{k,l_n}^{-1}-\Sigma_k^{-1})\bm{\epsilon}^k/n|>C n^{-\kappa}) &\leq P(||\bm{X}_k||_2||\bm{\epsilon}^k||_2||\hat{\Sigma}_{k,l_n}^{-1}-\Sigma_k^{-1}||_2/n> Cn^{-\kappa}) \label{second} \end{align}
Where the right hand side of (\ref{second}) is:
\begin{align} \leq P(||\hat{\Sigma}_{k,l_n}^{-1}-\Sigma_{k}^{-1}||_2 > Cn^{-\kappa}/\sqrt{M}) + P\left(\bigg(\sum_{i=1}^n X_{ik}^2/n\bigg) \bigg(\sum_{i=1}^n\epsilon_{i,k}^2/n\bigg) > M\right) \label{doublesquare}\end{align}
Let $M=M_1M_2$, where $M_1 \geq \max_{k \leq p_n} E(X_{i,k}^2)+\epsilon$, and $M_2= \max_{k \leq p_n} E(\epsilon_{i,k}^2)+\epsilon$, for some $\epsilon>0$. The second term of (\ref{doublesquare}) is:
\begin{align} &\leq P\left(\sum_{i=1}^n X_{ik}^2/n > M_1\right) + P\left(\sum_{i=1}^n\epsilon_{i,k}^2/n > M_2\right) \label{split3} \end{align}  
We can bound the above using the same techniques as in the previous proofs. 

By Condition E, the spectral density of the process $\epsilon_{t,k}, \forall k\leq p_n$ is bounded away from zero and infinity. Therefore, $0 < C_1\leq \lambda_{min}(\Sigma_k) \leq \lambda_{max}(\Sigma_k)\leq C_2 < \infty, \forall k \leq p_n$ \cite{Wu2009}. We then use:
\begin{align} \lambda_{\min}(\Sigma_{k})||\hat{\Sigma}_{k,l_n}^{-1}-\Sigma_{k}^{-1}||_2  &\leq ||\Sigma_k^{\frac{1}{2}}(\hat{\Sigma}_{k,l_n}^{-1}-\Sigma_{k}^{-1})\Sigma_k^{\frac{1}{2}}||_2\nonumber\\
& = ||\Sigma_k^{\frac{1}{2}}\hat{\Sigma}_{k,l_n}^{-1}\Sigma_k^{\frac{1}{2}}-I_n||_2  \label{decomp4}\end{align} 

Let $a_1 \geq a_2 \geq \ldots \geq a_n$ be the ordered eigenvalues of $\Sigma_k^{-\frac{1}{2}}\hat{\Sigma}_{k,l_n}\Sigma_k^{-\frac{1}{2}}$, therefore $||\Sigma_k^{\frac{1}{2}}\hat{\Sigma}_{k,l_n}^{-1}\Sigma_k^{\frac{1}{2}}-I_n||_2=\max_i |\frac{1}{a_i}-1|=\max_i |\frac{a_i-1}{a_i}|$. We then have
 \begin{align}\max_i |a_i-1|=||\Sigma_k^{-\frac{1}{2}}\hat{\Sigma}_{k,l_n}\Sigma_k^{-\frac{1}{2}}-I_n||_2 \leq \lambda_{\max}(\Sigma_k^{-1})||\hat{\Sigma}_{k,l_n}-\Sigma_{k}||_2 \label{decomp5}\end{align}
Let $a_j=argmin_{a_i}|a_i^{-1}|$, using this and (\ref{decomp4}),(\ref{decomp5}) we obtain:
\begin{align} P(||\hat{\Sigma}_{k,l_n}^{-1}-\Sigma_{k}^{-1}||_2 > Cn^{-\kappa}) &\leq P(||\hat{\Sigma}_{k,l_n}-\Sigma_{k}||_2 > Ca_jn^{-\kappa})\nonumber\\
& \leq P(||\hat{\Sigma}_{k,l_n}-\Sigma_{k}||_2 > CM_3n^{-\kappa}) +P(|a_j| < M_3) \label{decomp6}\end{align}
Where $M_3 \in (0,1-\epsilon)$ for $\epsilon> 0$. We then have
\begin{equation} P(|a_j| < M_3) \leq P(|a_j-1| > 1-M_3) \leq P(||\hat{\Sigma}_{k,l_n}-\Sigma_{k}||_2 >1-M_3)\nonumber\end{equation}
Combining the above with (\ref{decomp6}) and Lemma \ref{lem2}, we obtain:
\begin{equation}P(||\hat{\Sigma}_{k,l_n}^{-1}-\Sigma_{k}^{-1}||_2 > Cn^{-\kappa}) \leq l_nO\left(\frac{n^{\iota}l_n^{\tau'}K_{x,r}^{\tau'}\tilde{K}_{\epsilon,q'}^{\tau'}}{n^{\tau'-\tau'\kappa}}+\frac{n^{\zeta}l_n^{q'/2}\tilde{K}_{\epsilon,q'}^{q'}}{n^{q'/2-q'\kappa/2}} +\frac{n^{\omega}l_n^{r/2}K_{x,r}^{r}}{n^{r/2}}\right)  \label{corr}\end{equation}

By (\ref{combine}),(\ref{first}),(\ref{second}),(\ref{split3}),(\ref{corr}) we obtain a bound for $P(|T_2-E(T_2)|>C n^{-\kappa})$. For the term $P(|T_1-E(T_1)|>C n^{-\kappa})$, we proceed in a similar fashion:
\begin{align} P(|T_1-E(T_1)|>C n^{-\kappa}) &\leq P(|\bm{X}_k^T(\hat{\Sigma}_{k,l_n}^{-1}-\Sigma_k^{-1})\bm{X}^k/n|>C n^{-\kappa}/2)\label{combine2}\nonumber\\
&+ P(|\bm{X}_k^T\Sigma_k^{-1}\bm{X}^k/n-E(\bm{X}_k^T\Sigma_k^{-1}\bm{X}^k)|>C n^{-\kappa}/2)  \nonumber \end{align}
We can then obtain a bound on the above terms by following a similar procedure as before. Combining these gives us the result for (i). For (ii), using the result from (i) we follow a similar procedure to the proof of Theorem \ref{theorem2}. For (iii) and (iv) we follow the same procedure as (i) and (ii), and apply the methods seen in the proof of Theorem \ref{theorem4}; we omit the details.

\end{proof}

\begin{proof} [Proof of Corollary \ref{cor1}]
  $ $\newline

Recall that:
\begin{equation} \beta_{k}^M=E(y_t-\sum_{i=1}^{L_k} \alpha_iy_{t-i})(X_{t,k}-\sum_{i=1}^{L_k} \alpha_iX_{t-i,k})/(E(X_{t,k}-\sum_{i=1}^{L_k} \alpha_iX_{t-i,k})^2)\nonumber\end{equation}
Therefore by our assumption, we have that $\beta_k^M \propto \rho_k$ whenever $\beta_k^M>0$. Using this we obtain $\sum_{k=1}^{p_n} (\beta_k^{M})^2 = O(\sum_{k=1}^{p_n} \rho_k^2)= O(\lambda_{\max}(\Sigma))$. We obtain the result, by following the procedure in the proof of Theorem \ref{theorem2} and using the results from Theorem \ref{theorem6}.

\end{proof}

\begin{proof} [Proof of Theorem \ref{theorem7}]
  $ $\newline

For simplicity we only prove part (i), the proof for part (ii) follows similarly. We will work on the following set $\mathcal{D}_n=\mathcal{A}_n \cap \mathcal{B}_n \cap \mathcal{C}_n$, where
\begin{align*} 
&\mathcal{A}_n=\{max_{k \leq p_n}|\hat{\rho}_k-\rho_k|\leq c_3n^{-\kappa}/2\}  \nonumber\\
& \mathcal{B}_n=  \{max_{i,j \leq d_n'}|[\Sigma_{\mathcal{M}_{\frac{\gamma_n}{2}}}-\hat{\Sigma}_{\mathcal{M}_{\frac{\gamma_n}{2}}}]_{i,j}|\leq \frac{\phi_0}{16s_n} \} \nonumber\\
&\mathcal{C}_n=\{max_{k \leq d_n'}|\sum_{i=1}^{n} X_{ik}\epsilon_i| \leq \lambda_nn^{\psi/2} \}
\end{align*}

On the set $\mathcal{A}_n$, if we apply screening as a first stage procedure, by our choice of $\gamma_n$, we obtain:
\begin{equation}
\mathcal{M}_{*} \subset \hat{\mathcal{M}}_{\gamma_n} \subset \mathcal{M}_{\gamma_n/2}
\label{inclusion} \end{equation}
Next we need to use Lemma 7 and 8 in \cite{MM2016}, specifically we need to show our reduced model satisfies conditions DGP 3,DESIGN, and WEIGHTS in \cite{MM2016}. On the set $ \mathcal{B}_n$, by Lemma 1 in \cite{MM2016}, we have $\phi_{\Sigma_{\hat{\mathcal{M}}_{\gamma_n/2}}}=\phi_{\Sigma_{\mathcal{M}_{\gamma_n/2}}} = \phi_0$. Therefore, we have:
\begin{equation}
\phi_{\Sigma_{\hat{\mathcal{M}}_{\gamma_n}}}=\min_{S \subseteq \{1,\ldots,d_n\}, |S| \leq s_n} \min_{\bm{v}\neq 0, |\bm{v}_{S^c}| \leq 3|\bm{v}_{S}|} \frac{\bm{v}^{T}\Sigma_{\hat{\mathcal{M}}_{\gamma_n}}\bm{v}}{\bm{v}^{T}\bm{v}} \geq \phi_0 \end{equation}

Using this along with Lemma 1 in \cite{MM2016} and Condition J, we have that DESIGN 3a is satisfied with $\phi_{\min}=\phi_0/16$, where $\inf_{v^Tv=1}v^T\Sigma_{11}v >2\phi_{\min}>0$, and $\Sigma_{11} $ is the covariance matrix of the relevant predictors. On the set $\mathcal{D}_n$, by Conditions K and L in our work, and Lemma 2 and proposition 1 in  \cite{MM2016}, assumption WEIGHTS is satisfied. On the set $\mathcal{A}_n \cap \mathcal{B}_n$, DGP 3 and DESIGN 3b are satisfied, while DESIGN 2 is satisfied by Condition L. 

Now by proposition 2, Lemmas 7 and 8 in \cite{MM2016} we obtain:
\begin{equation} P(sgn( \hat{\bm{\beta}}_{\hat{\mathcal{M}}_{\gamma_n}})=sgn(\bm{\beta})) \geq P(\mathcal{A}_n \cap \mathcal{B}_n \cap \mathcal{C}_n) \geq 1-P(\mathcal{A}_n^{\complement})-P(\mathcal{B}_n^{\complement})-P(\mathcal{C}_n^{\complement}) \end{equation}
$P(\mathcal{A}_n^{\complement})$ is given in Theorem \ref{theorem2} part i. For $P(\mathcal{B}_n^{\complement})$ using the method in the proof for Theorem \ref{theorem2}, we obtain:
\begin{equation} P(\mathcal{B}_n^{\complement}) \leq d_n^{'2}O\left(\frac{n^{\omega} K_{x,r}^{r}}{n^{r/2}}+ \exp\left(-n/K_{x,r}^4\right)\right) \end{equation}
And for $P(\mathcal{C}_n^{\complement})$:
\begin{equation} P(\mathcal{C}_n^{\complement}) \leq d_n'O\left(\frac{n^{\iota} K_{x,r}^{\tau}K_{\epsilon,q}^{\tau}}{\lambda_n^{\tau}n^{\tau\psi/2}}+ \exp\left(-\lambda_n^2 n^{\psi-1}/K^2_{x,r}K^2_{\epsilon,q}\right)\right) \end{equation}

To prove part ii) we follow the same steps from part i). We obtain $P(\mathcal{A}_n^{\complement})$, $P(\mathcal{B}_n^{\complement}), P(\mathcal{C}_n^{\complement})$ by following the method in the proof of Theorem \ref{theorem4}, and using Theorem 3 in \cite{WuandWu2016}.
\end{proof}

\subsection{Asymptotic Distribution of GLS estimator}\label{appendixB}
\begin{lem} Assume conditions E,F,G,H hold, then $\sqrt{n}(\hat{\beta}_k^{M}-\beta_k^{M})$ and $\sqrt{n}(\tilde{\beta}_k^{M}-\beta_k^{M})$ have the same asymptotic distribution.
\label{lem3}\end{lem}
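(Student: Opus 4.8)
The plan is to reduce the statement to a score-equivalence and then control the effect of estimating the covariance matrix. Since $\sqrt{n}(\hat{\beta}_k^M - \beta_k^M) = \sqrt{n}(\tilde{\beta}_k^M - \beta_k^M) + \sqrt{n}(\hat{\beta}_k^M - \tilde{\beta}_k^M)$, by Slutsky's theorem it suffices to prove $\sqrt{n}(\hat{\beta}_k^M - \tilde{\beta}_k^M) = o_p(1)$. Adopting the notation from the proof of Theorem \ref{theorem6}, I would write $T_1 = \bm{X}_k^T\hat{\Sigma}_{k,l_n}^{-1}\bm{X}_k/n$, $T_2 = \bm{X}_k^T\hat{\Sigma}_{k,l_n}^{-1}\bm{\epsilon}^k/n$ and their infeasible counterparts $\tilde{T}_1 = \bm{X}_k^T\Sigma_k^{-1}\bm{X}_k/n$, $\tilde{T}_2 = \bm{X}_k^T\Sigma_k^{-1}\bm{\epsilon}^k/n$. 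Using the marginal model (\ref{marg}) together with the $O(1/n)$ edge term produced by Condition E, both estimators satisfy $\hat{\beta}_k^M - \beta_k^M = T_2/T_1 + O(1/n)$ and $\tilde{\beta}_k^M - \beta_k^M = \tilde{T}_2/\tilde{T}_1 + O(1/n)$, so the target becomes $\sqrt{n}(T_2/T_1 - \tilde{T}_2/\tilde{T}_1) = o_p(1)$.

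Next I would split this via the identity $T_2/T_1 - \tilde{T}_2/\tilde{T}_1 = (T_2 - \tilde{T}_2)/T_1 - \tilde{T}_2(T_1 - \tilde{T}_1)/(T_1\tilde{T}_1)$. By Condition E the spectral density of $\epsilon_{t,k}$ is bounded away from zero and infinity, so the eigenvalues of $\Sigma_k$ are uniformly bounded (as already exploited in the proof of Theorem \ref{theorem6}); this forces $T_1$ and $\tilde{T}_1$ to converge to the same strictly positive limit $E(\bm{X}_k^T\Sigma_k^{-1}\bm{X}_k)/n$, so they are bounded away from zero with probability tending to one while $T_1 - \tilde{T}_1 = o_p(1)$. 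Since $\tilde{T}_2$ is the normalized infeasible GLS score, $\sqrt{n}\,\tilde{T}_2 = O_p(1)$ by a central limit theorem for the stationary ergodic sequence of whitened cross-products under the moment and dependence conditions of Condition H; hence the denominator contribution $\sqrt{n}\,\tilde{T}_2(T_1 - \tilde{T}_1)/(T_1\tilde{T}_1)$ is $O_p(1)\cdot o_p(1) = o_p(1)$.

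The crux is therefore $\sqrt{n}(T_2 - \tilde{T}_2) = \sqrt{n}\,\bm{X}_k^T(\hat{\Sigma}_{k,l_n}^{-1} - \Sigma_k^{-1})\bm{\epsilon}^k/n = o_p(1)$. Here I would use the expansion $\hat{\Sigma}_{k,l_n}^{-1} - \Sigma_k^{-1} = -\Sigma_k^{-1}(\hat{\Sigma}_{k,l_n} - \Sigma_k)\Sigma_k^{-1} + R$, where $R = \Sigma_k^{-1}(\hat{\Sigma}_{k,l_n}-\Sigma_k)\Sigma_k^{-1}(\hat{\Sigma}_{k,l_n}-\Sigma_k)\hat{\Sigma}_{k,l_n}^{-1}$ is quadratic in $\hat{\Sigma}_{k,l_n}-\Sigma_k$. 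The remainder $R$ is handled exactly as in (\ref{cauchy})--(\ref{second}) by Cauchy-Schwarz together with the operator-norm bound of Lemma \ref{lem2} and $\|\bm{X}_k\|_2\|\bm{\epsilon}^k\|_2/n = O_p(1)$, since it is of order $\|\hat{\Sigma}_{k,l_n}-\Sigma_k\|_2^2$. For the leading term I would group entries by lag: since $\Sigma_k$ is not banded, $\hat{\Sigma}_{k,l_n}-\Sigma_k$ splits into an on-band part with entries $\hat{\gamma}_{h,k}-\gamma_{h,k}$, $|h|\le l_n$, and an off-band part whose contribution is $o(n^{-1/2})$ by the geometric decay of the autocovariances under Condition E (as in (\ref{split1})). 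Writing $\bm{u}=\Sigma_k^{-1}\bm{X}_k$ and $\bm{w}=\Sigma_k^{-1}\bm{\epsilon}^k$, the on-band part equals $\sum_{|h|\le l_n}(\hat{\gamma}_{h,k}-\gamma_{h,k})\,n^{-1}\sum_i u_i w_{i-h}$; each lagged sample autocovariance error is $O_p(n^{-1/2})$, the whitened cross-product averages $n^{-1}\sum_i u_i w_{i-h}$ concentrate around negligible means by the orthogonality of the GLS normal equations (Condition G), and the logarithmic band length keeps the number of lags at $O(\log n)$, so the whole sum is $o_p(n^{-1/2})$.

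The main obstacle is precisely this last step. A naive bound on the leading bilinear form through the operator norm and Cauchy-Schwarz, combined with Lemma \ref{lem2}, only yields $O_p(n^{1/2-\kappa})$ after multiplication by $\sqrt{n}$, which diverges for every admissible $\kappa<1/2$, and even the sharper maximum-row-sum bound loses a spurious $\log n$ factor from summing over the band. Obtaining $o_p(n^{-1/2})$ genuinely requires exploiting cancellation in the whitened cross-product averages $n^{-1}\sum_i u_i w_{i-h}$ rather than bounding them crudely, which is where the GLS orthogonality structure and a variance (or Nagaev-type concentration) argument for the bilinear form enter. This mirrors, in the stochastic-regressor and nonparametrically-banded setting, the classical equivalence of feasible and infeasible GLS established for non-stochastic designs in \cite{amemiya1973}.
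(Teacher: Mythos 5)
Your reduction is the same one the paper uses. The paper's proof invokes the two classical sufficient conditions for feasible/infeasible GLS equivalence from \cite{Davidson2004}, namely $\bm{X}_k^T(\hat{\Sigma}_{k,l_n}^{-1}-\Sigma_k^{-1})\bm{\epsilon}^k/\sqrt{n} \rightarrow 0$ and $\bm{X}_k^T(\hat{\Sigma}_{k,l_n}^{-1}-\Sigma_k^{-1})\bm{X}_k/n \rightarrow 0$, and then asserts both are ``satisfied by the proof of Theorem \ref{theorem6}.'' Your Slutsky decomposition with $T_1,T_2,\tilde{T}_1,\tilde{T}_2$ is exactly this reduction, and your handling of the Hessian-type condition and of the denominators is sound: that condition needs no $\sqrt{n}$ inflation, so the $O_p(n^{-\kappa})$ bounds from Lemma \ref{lem2} and the eigenvalue bounds on $\Sigma_k$ suffice.

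The gap is in the score condition, and you flag it yourself: your proposal stops at precisely the step that constitutes the lemma. You correctly observe that the route through Cauchy--Schwarz ((\ref{cauchy})--(\ref{second})) and the operator-norm bound of Lemma \ref{lem2} yields only $\bm{X}_k^T(\hat{\Sigma}_{k,l_n}^{-1}-\Sigma_k^{-1})\bm{\epsilon}^k/n = O_p(n^{-\kappa})$ with $\kappa<1/2$ --- indeed the best rate available from (\ref{split1}) is $\|\hat{\Sigma}_{k,l_n}-\Sigma_k\|_2 = O_p(l_n n^{-1/2})$ --- so after multiplication by $\sqrt{n}$ nothing converges to zero. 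Your proposed remedy (first-order expansion of $\hat{\Sigma}_{k,l_n}^{-1}-\Sigma_k^{-1}$, grouping on-band entries by lag, and gaining the extra $n^{-1/2}$ from zero-mean concentration of the whitened cross-products $n^{-1}\sum_i u_i w_{i-h}$) is the right idea and mirrors \cite{amemiya1973}, but it is not carried out: you never establish that $E(u_i w_{i-h})$ is negligible for $h\neq 0$ --- Condition G gives orthogonality only in the aggregate lag-zero form $E(\bm{X}_k^T\Sigma_k^{-1}\bm{\epsilon})=0$ --- and you supply no concentration bound for the lagged whitened averages. So as a proof the proposal is unfinished. It is worth stating plainly, however, that the paper's own two-line proof elides the same point: the bounds in the proof of Theorem \ref{theorem6} deliver the second condition but, for the first, give only the divergent $O_p(n^{1/2-\kappa})$ you identify. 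Your diagnosis of where the real work lies is therefore sharper than the paper's treatment, but it remains a diagnosis; closing the lemma requires executing the bilinear-form cancellation argument you sketch, including the treatment of the lag-$h$ means.
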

\begin{proof} [Proof of Lemma \ref{lem3}]

It is clear that sufficient conditions for the feasible GLS estimator $\hat{\beta}_k^{M}$, and $\tilde{\beta}_k^{M}$ to have the same asymptotic distribution are \cite{Davidson2004}:
\begin{align*}
&\bm{X}_k^T(\hat{\Sigma}_{k,l_n}^{-1}-\Sigma_k^{-1})\bm{\epsilon}^k/\sqrt{n} \rightarrow 0 \\
&\bm{X}_k^T(\hat{\Sigma}_{k,l_n}^{-1}-\Sigma_k^{-1})\bm{X}_k/n \rightarrow 0
\end{align*}
By the proof of theorem \ref{theorem6}, both these conditions are satisfied, therefore $\hat{\beta}_k^{M}$, and $\tilde{\beta}_k^{M}$ have the same asymptotic distribution.
\end{proof}

We use the above lemma, and rely on the asymptotic distribution of $\tilde{\beta}_k^{M}$ to provide an explanation for the superior performance of GLSS, and its robustness to increasing levels of serial correlation in $\epsilon_{t,k}$. We deal with three cases, and we assume an AR(1) process for the errors for simplicity and ease of presentation. The results can be generalized to AR$(p)$ processes, by using the moving average representation of $\epsilon_{t,k}$:
\newline\newline
\underline{\textbf{Case 1:}} 
\newline\newline
We start with the setting used in figure \ref{figure1}, assume $x_{t,k}$ is iid and $\epsilon_{t,k}=\alpha\epsilon_{t-1,k}+e_t$, with $x_{t,k}$, and $\epsilon_{t,k}$ being independent $\forall t$. Using Gordin's central limit theorem \cite{Hayashi}, we calculate the asymptotic distribution of $\sqrt{n}(\tilde{\beta}_k^M-\beta_k^M) \rightarrow N(0,J)$, where $J=\frac{\sigma_{e}^2}{\sigma_{x_k}^2(1+\alpha^2)}$, $\sigma_{e}^2=var(e_t)$, and $\sigma_{x_k}^2=var(x_{t,k})$ . Using the same methods we calculate the asymptotic distribution of the marginal OLS estimator as $\sqrt{n}(\hat{\rho}_k-\rho_k) \rightarrow N(0,V)$, where $V=\frac{\sigma_{e}^2}{\sigma_{x_k}^2(1-\alpha^2)}$. Therefore the variance of the OLS estimator increases without bound as $\alpha$ increases towards 1. Whereas the variance of the GLS estimator actually decreases as $\alpha$ increases.
\newline\newline
\underline{\textbf{Case 2:}}
\newline\newline
We expand this to the case when $x_{t,k}$ is temporally dependent, for simplicity we let $x_{t,k}=\phi x_{t-1,k} + \eta_t$. We still assume $x_{j,k}$ and $\epsilon_t$ are independent $\forall j,t$, and $\epsilon_{t,k}=\alpha\epsilon_{t-1,k}+e_t$. This is the setting for the first model in the simulations section. Using Gordin's central limit theorem, and elementary calculations: $\sqrt{n}(\tilde{\beta}_k^M-\beta_k^M) \rightarrow N(0,J)$, where $J=\frac{(1-\phi^2)\sigma_{e}^2}{(1+\alpha^2-2\phi \alpha)\sigma_{\eta}^2}$. And for the marginal OLS estimator $\sqrt{n}(\hat{\rho}_k-\rho_k) \rightarrow N(0,V)$, where $V=\frac{(1+\phi^2)\sigma_e^2}{(1-\alpha^2)\sigma_{\eta}^2}$. We clearly see that for fixed $\phi$, the GLS estimate is robust to increasing $\alpha$, whereas the variance of the OLS estimator increases without bound as $\alpha$ increases towards 1. This sensitivity to $\alpha$ provides an explanation for the results seen in case 1 of the simulations, which show the performance of SIS severely deteriorates for high levels of serial correlation in $\epsilon_{t,k}$ 
\newline\newline
\underline{\textbf{Case 3:}}
\newline\newline
In both the previous cases, it is easy to see the GLS estimator is asymptotically efficient to the OLS estimator. For the case where $\bm{X}_k=(x_{t,k},t=1,\ldots,n)$ and $\bm{\epsilon}^k=(\epsilon_{t,k},t=1,\ldots,n)$ are dependent on each other, it is more complicated. In this setting, it is likely the case that $\rho_k \neq \beta_{k}^M$. Assume  $\epsilon_{t,k}=\alpha\epsilon_{t-1,k}+e_t$, and let $x_{t,k}-\alpha x_{t-1,k}=\tilde{x}_{t,k}$, and $W_1=\sum_{i=-\infty}^{\infty} \gamma(i)$, where $\gamma(i)=cov(\tilde{x}_{t,k}e_t,\tilde{x}_{t-i,k}e_{t-i})$. We start by examining the asymptotic distribution of $\sqrt{n}(\tilde{\beta}_k^M-\beta_k^M) \rightarrow N(0,J)$, where $J=W_1/(var(\tilde{x}_{t,k}))^2$. By the proof of theorem 1 in \cite{Wu2009}, $W_1 \leq (\sum_{t=0}^{\infty}\delta_{2}(\tilde{x}_{t,k}e_t))^2$, which gives us:
 \begin{align*}J\leq \frac{\left(\sum_{t=0}^{\infty}\delta_{2}(\tilde{x}_{t,k}\epsilon_t)\right)^2}{var(\tilde{x}_{t,k})^2} \leq \left(\frac{2||e_{t}||_4\Delta_{0,4}(\tilde{\bm{X}}_k)}{var(\tilde{x}_{t,k})}\right)^2 
 \end{align*}
 Where the last inequality follows from: $\delta_{2}(\tilde{x}_{t,k}e_t)=||e_0||_4||\tilde{x}_{t,k}-\tilde{x}_{t,k}^{*}||_4+||\tilde{x}_{0,k}||_4||e_t-e_t{*}||_4$. Since $e_t$ is iid $||e_t-e_t{*}||_4=0,\forall t>0$. If we assume, $x_{t,k}=\phi x_{t-1,k} + \eta_t$, by writing $\tilde{x}_{t,k}=\eta_t+(\phi-\alpha)x_{t-1,k}$, we have:
 \begin{align*} J \leq  \left(\frac{2||e_{t}||_4||\eta_t||_4|\phi-\alpha|}{(1-|\phi|)var(\tilde{x}_{t,k})}+\frac{2||e_{t}||_4||\eta_t||_4}{var(\tilde{x}_{t,k})}\right)^2 
 \end{align*}
 From these results we see that the asymptotic variance of the GLS estimator is bounded when $\alpha$ increases towards 1, and is largely robust to increasing levels of serial correlation in $\epsilon_{t,k}$. This result seems to provide an explanation for GLSS being robust to increasing levels of serial correlation in our simulations.
 
For the OLS estimator we obtain, $(\hat{\rho}_k-\rho_k) \rightarrow N(0,V)$, where $V=W_2/(var(x_{t,k}))^2$ and $W_2=\sum_{i=-\infty}^{\infty}cov(x_{t,k}\epsilon_t,x_{t-i}\epsilon_{t-i})$. As before, we can bound:
 \begin{align*}V \leq \frac{(\sum_{t=0}^{\infty}\delta_{2}(x_{t,k}e_t))^2}{var(x_{t,k}))^2} \leq \left(\frac{||\epsilon_{t,k}||_4\Delta_{0,4}(\bm{X}_k)}{var(x_{t,k})}+\frac{2||X_{k,t}||_4||e_t||_4}{(1-|\alpha|)var(x_{t,k})}\right)^2
 \end{align*}

We see the above bound is very sensitive to increasing serial correlation in $\epsilon_{t,k}$. Although this is an upper bound to the asymptotic variance, it seems to explain the deterioration in performance of SIS when increasing the serial correlation of $\epsilon_{t,k}$ in our simulations.
\section*{Acknowledgments}

The author would like to thank the two anonymous referees, the Associate Editor, and the Editor for their helpful
comments that have improved the paper.

\bibliographystyle{imsart-nameyear}
\bibliography{varselection}

\end{document}